\title{A Unified Approach for Clustering Problems on Sliding Windows}
\author{
Vladimir Braverman\thanks{Department of Computer Science, Johns Hopkins University. {\tt vova@cs.jhu.edu},
This material is based upon work supported in part by the
National Science Foundation under Grant No. 1447639, by the Google
Faculty Award and by DARPA grant N660001-1-2-4014. Its contents are
solely the responsibility of the authors and do not represent the
official view of DARPA or the Department of Defense.}
\and
Harry Lang\thanks{ Department of Mathematics, Johns Hopkins University. {\tt hlang8@jhu.edu}}
\and
Keith Levin\thanks{ Department of Computer Science, Johns Hopkins University. {\tt klevin@jhu.edu}}
\and 
Morteza Monemizadeh\thanks{Computer Science Institute of Charles University,
Faculty of Mathematics and Physics, Prague, Czech Republic. Partially supported by the project 14-10003S of GA \v{C}R. Part of this work was done when the author was at Department of Computer Science, Goethe-Universit\"{a}t Frankfurt, Germany and supported in part by MO 2200/1-1. {\tt monemi@iuuk.mff.cuni.cz}}
}
\def\zeit{\number\shorthour:\ifnum\shortminute<10 0\number\shortminute
\else\number\shortminute\fi}
\newenvironment{proof}{\noindent{\bf Proof : \ }}{\hfill$\Box$\par\medskip}
\newtheorem{theorem}{Theorem}
\newtheorem{lemma}[theorem]{Lemma}
\newtheorem{definition}[theorem]{Definition}
\newtheorem{claim}[theorem]{Claim}
\newenvironment{proofof}[1]{\begin{trivlist} \item {\bf Proof
#1:~~}}
  {\qed\end{trivlist}}
\renewenvironment{proofof}[1]{\par\medskip\noindent{\bf Proof of #1: \ }}{\hfill$\Box$\par\medskip}
\renewcommand{\paragraph}[1]{\medskip \noindent {\bf #1}}
\newcommand{\COMMENTED}[1]{{}}
\newcommand{\NAT}{\ensuremath{\mathbb{N}}}
\newcommand{\NATURAL}{\NAT}
\newcommand{\REAL}{\ensuremath{\mathbb{R}}}
\newcommand{\poly}{{\mathrm{poly}}}
\newcommand{\tri}{\lambda} % for the approximate triangle inequality
\newcommand{\PLS}{\operatorname{PLS}}
\newcommand{\textPLS}{{\sc PLS }}
\newcommand{\textPLSns}{{\sc PLS}}
\newcommand{\store}{\operatorname{store}}
\newcommand{\KHmark}{\operatorname{mark}}
\newcommand{\update}{{\sc Update}}
\newcommand{\cost}{\ensuremath{\text{\footnotesize\textsf{COST}}}}
\newcommand{\opt}{\ensuremath{\text{\footnotesize\textsf{OPT}}}}
\newcommand{\dist}{\text{dist}}
\renewcommand{\dist}{\ensuremath{\text{\footnotesize\textsf{DIST}}}}
\renewcommand{\dist}{\ensuremath{d}}
\newlength{\savedparindent}
\begin{document}

%---------------------------------------------------------------------------------------------------------------------------------------------------
\sloppy
\begin{titlepage}
\maketitle\thispagestyle{empty}

\begin{abstract}
We explore clustering problems in the streaming sliding window model
in both general metric spaces and Euclidean space.
We present the first polylogarithmic space $O(1)$-approximation to
the metric $k$-median and metric $k$-means problems in the sliding window model,
answering the main open problem posed by
Babcock, Datar, Motwani and O'Callaghan~\cite{BDMO03},
which has remained unanswered for over a decade.
Our algorithm uses $O(k^3 \log^6 n)$ space
and $\poly(k, \log n)$ update time.
This is an exponential improvement on the space required by the
technique due to Babcock, et al.
We introduce a data structure that extends
smooth histograms as introduced by Braverman and Ostrovsky~\cite{BO06}
to operate on a broader class of functions.
In particular, we show that using only polylogarithmic space
we can maintain a summary of the current window from which we
can construct an $O(1)$-approximate clustering solution.

Merge-and-reduce is a generic method in computational geometry for
adapting offline algorithms to the insertion-only streaming model.
Several well-known coreset constructions are maintainable in the
insertion-only streaming model using this method,
including well-known coreset techniques for the $k$-median, $k$-means 
in both low-and high-dimensional
Euclidean spaces~\cite{HPM04,Ch09}.
Previous work has adapted these techniques to the insertion-deletion
model, but translating them to the sliding window model has remained
a challenge.
We give the first algorithm that, given an insertion-only
streaming coreset construction of space $s$,
maintains a $(1\pm\epsilon)$-approximate coreset in the sliding window model
using $O(s^2\epsilon^{-2}\log n)$ space.

For clustering problems, our results constitute the first significant
step towards resolving problem number 20 from the List of Open Problems
in Sublinear Algorithms~\cite{sublinear_open_20}.

\end{abstract}
\end{titlepage}
\maketitle

%---------------------------------------------------------------------------------------------------------------------------------------------------
\section{Introduction}
\label{sec:intro}
Over the past two decades, the streaming model of computation~\cite{M05}
has emerged as a popular framework in which to develop algorithms for
large data sets.
In the streaming model, we are restricted to using space sublinear in the size
of the input, and this input must typically must be processed in a single pass.
While the streaming model is broadly useful,
it is inadequate for domains in which data is time-sensitive
such as network monitoring~\cite{C13,CG08,CM05x}
and event detection in social media~\cite{OsborneEtAl2014}.
In these domains, elements of the stream appearing more recently are
in some sense more relevant to the computation being performed.
The sliding window model was developed to capture this
situation~\cite{DGIM02}.
In this model, the goal is to maintain a computation on only the most
recent $W$ elements of the stream, rather than on the stream in
its entirety.

We consider the problem of clustering in the sliding window model.
Algorithms have been developed for a number of streaming clustering
problems, including $k$-median~\cite{GMMO00,COP03, HPM04,FS05},
$k$-means~\cite{Ch09,FMS07} and facility location~\cite{CLMS13}.
However, while the sliding window model has received renewed attention
recently~\cite{CMS13,BCM13},
no major clustering results in this model have been published since
Babcock, Datar, Motwani and O'Callaghan~\cite{BDMO03}
presented a solution to the $k$-median problem.
Polylogarithmic space $k$-median algorithms exist
in the insertion-only streaming model~\cite{COP03,HPM04} and the
insertion-deletion model~\cite{Ind04,FS05,IP11},
but no analogous result has appeared to date for the sliding window model.
Indeed, the following question by Babcock, et al.~\cite{BDMO03}
has remained open for more than a decade:
\begin{quote} \itshape
Whether it is possible to maintain approximately optimal
medians in polylogarithmic space (as Charikar et al.~\cite{COP03} do
in the stream model without sliding windows), rather than
polynomial space, is an open problem.
\end{quote}

Much progress on streaming clustering problems in Euclidean space has been due 
to coresets\cite{HPM04,Ch09,FS05,FFS06,FMSW10}.
But, similarly to the metric case,
methods for maintaining coresets on sliding windows for Euclidean 
clustering problems have been hard to come by.
Streaming insertion-only coreset techniques exist for the Euclidean
$k$-median and $k$-means problems in low-and high-dimensional spaces,
but to our knowledge no such results exist for the sliding window model.
We present the first such technique,
a general framework in which one can build coresets for
a broad class of clustering problems in Euclidean space.

\subsection{Our Contribution}
\paragraph{Metric clustering problems in the sliding window model.}
We present the first polylogarithmic space
$O(1)$-approximation for the metric $k$-median problem in the sliding window
model, answering the question posed by Babcock et al.~\cite{BDMO03}.
Our algorithm uses $O(k^3 \log^6 n)$-space and
requires update time $O(\poly(k, \log n) )$, with the exact
update time depending on which of the many existing
offline $O(1)$-approximations for $k$-median one chooses.
We also demonstrate how this result extends to a broad class of related
clustering problems, including $k$-means.
The one requirement of our result is a polynomial bounded on the
ratio of the optimal cost to the minimum inter-point distance
on any window of the stream. 

Braverman and Ostrovsky~\cite{BO07} introduced smooth histograms
as a method for adapting insertion-only streaming algorithms to the
sliding windows model for a certain class of functions,
which Braverman and Ostrovsky call \emph{smooth}.
Unfortunately, the $k$-median and $k$-means costs are not smooth functions,
so smooth histograms cannot be directly applied.
Our major technical contribution lies in the extension of
smooth histograms~\cite{BO07} to a class of clustering functions,
including $k$-median and $k$-means clustering,
that are less well-behaved than smooth functions.
We show that clustering problems $k$-median and $k$-means
do possess a property similar to smoothness
provided that a pair of conditions hold related to the cluster cardinalities
and costs of clustering solutions built on certain subsets of the stream.
We develop a streaming data structure that ensures that these two conditions
are satisfied where necessary
so that the core ideas behind the smooth histogram data structure
can be brought to bear.
Using the algorithms of~\cite{COP03} and~\cite{BO07},
we show that the bookkeeping necessary for our approach can be maintained
using in polylogarithmic space and time.

\paragraph{Euclidean clustering problems in the sliding window model.}
Merge-and-reduce is a generic method in computational geometry to implement
offline algorithms in the insertion-only streaming model.
Several well-known coreset constructions are conducive to this method,
including well-known coreset techniques for the $k$-median and $k$-means problems in both low-and high-dimensional
Euclidean spaces~\cite{HPM04,Ch09}.
We develop a sliding window algorithm that, given one of these insertion-only
streaming coresets of size $s$, maintains this coreset in the sliding window
model using $O(s^2\epsilon^{-2}\log n)$ space. 

To develop our generic framework, we consider a sequence $X$ of indices 
of arrival times of points as in the smooth histograms~\cite{BO06}. 
For each index $x_i\in X$ we maintain a coreset (using merge-and-reduce)
of points whose arrival times are between $x_i$ and current time $N$.
In particular, as the points in the interval $[x_i,N]$ arrive, 
we compute coresets for small subsets of points. 
Once the number of these coresets is big enough,
we \emph{merge} these coresets  and \emph{reduce} them by computing a
coreset on their union.
This yields a tree whose root is a coreset of its leaves,
which contain subsets of the points in the interval $[x_i,N]$. 
The well-known coreset techniques of~\cite{HPM04,Ch09,FFS06,FMSW10} 
mostly partition the space into small set of regions and from each region
take a small number of points either randomly or deterministically. 
Hence, at the root of the merge-and-reduce tree,
we have a partition from whose regions we take weighted coreset points. 

The crux of the smooth histograms~\cite{BO06} data structure
is to maintain a small set of indices $X$ 
such that for every two consecutive indices $x_i$ and $x_{i+1}$ all
intervals $[x_{i}+1,N],\dots,[x_{i+1}-1,N]$ have clustering cost which are
within $(1+\epsilon)$-fraction of each other.
By keeping only indices $x_i,x_{i+1}$,
we smooth the cost between these two indices. 
To this end, we look at the partition of the root of the merge-and-reduce tree
corresponding to arrival time $t\in[x_i,x_{i+1}]$. 
If there is a region in this partition with at most $\epsilon$-fraction of its
points in interval $[x_i,x_{i+1}]$, then we ignore the interval $[t,N]$;
otherwise we add index $t$ to $X$ and keep the coreset of points for the
interval $[t,N]$. 
We show using a novel application of VC-dimension and $\epsilon$-sample
theory~\cite{as-pm} that 
if we take small random samples from every region of the partitions in the
intermediate nodes of the merge-and-reduce tree, 
then the coreset points inside every region of the partition of the root of
this tree is a good approximation of 
the original points in that region.
Thus, testing whether $\epsilon$-fraction of coreset points of a region are
in the interval $[x_i,x_{i+1}]$ is a good approximation for testing whether
$\epsilon$-fraction of the original points of the region are in this interval. 
We also show that if for every region in the partition at most
$\epsilon$-fraction of its points are in the interval 
$[x_i,x_{i+1}]$, then ignoring the interval $[t,N]$ and keeping only
intervals $[x_i,N]$ and $[x_{i+1},N]$ 
loses at most $\epsilon$-fraction of the clustering cost of points in
interval $[t,N]$. 

Frahling and Sohler~\cite{FS05} developed a coreset technique
for $k$-median and $k$-means problems in low-dimensional spaces
in the dynamic geometric stream 
(i.e., a stream of insertions and deletions of points)
using the heavy hitters algorithm~\cite{cm05b} and a logarithmic sampling rate.
We observe that we can maintain their coreset in the sliding window model
using the heavy hitter algorithm and sampling 
techniques proposed for the sliding window model due to
Braverman, Ostrovsky and Zaniolo~\cite{BOZ12}.
However, their approach does not work for other well-known coreset techniques
for Euclidean spaces~\cite{HPM04,Ch09,FFS06,FMSW10},
motivating the need for a different technique,
which we develop in this paper.

\subsection{Related Work}
\label{sec:rel:wo}
Guha, Mishra, Motwani and O'Callaghan~\cite{GMMO00} presented the
first insertion-only streaming algorithm for the $k$-median problem.
They gave a $2^{O(1/\epsilon)}$-approximation
using $O(n^{\epsilon})$ space, where $\epsilon<1$.
Charikar, O'Callaghan, and Panigrahy~\cite{COP03},
subsequently developed an $O(1)$-approximation
insertion-only streaming algorithm using $O(k \log^2 n)$ space.
Their approach operates in phases, similarly to~\cite{GMMO00},
maintaining a set of $O(k \log n)$ candidate centers
that are reduced to exactly $k$ centers using an offline $k$-median
algorithm after the entire stream has been observed.

Slightly stronger results hold when the elements of the stream are points
in $d$-dimensional Euclidean space $\REAL^d$.
Har-Peled and Mazumdar~\cite{HPM04} developed a $(1+\epsilon)$-approximation
for $k$-median and $k$-means in the insertion-only
streaming model using (strong) coresets.
Informally, a strong $(k,\epsilon)$-coreset for $k$-median
is a weighted subset $S$ from some larger set of points $P$
that enables us to (approximately) evaluate the quality of a
candidate solution on $P$ using small space.
The coresets presented by Har-Peled and Mazumdar~\cite{HPM04} required
$ O( k \epsilon^{-d} \log{n} )$ space, yielding a streaming solution
using $O(k \epsilon^{-d} \log^{2d+2}{n})$ space via the famous
merge-and-reduce approach~\cite{bs-dspsd-80,AHV04}.
Har-Peled and Kushal~\cite{HPK05} later developed coresets of size $ O( k^2 \epsilon^{-d} )$ for $k$-median and $k$-means problems.
Unfortunately, these new coresets do not result in significant space
improvements in the streaming model.
Feldman, Fiat and Sharir~\cite{FFS06} later extended this type of
coreset to the case where centers can be lines or flats.

In high-dimensional spaces, Chen~\cite{Ch09} presented a technique
for building $(k,\epsilon)$-coresets of size $O(k^2d\epsilon^{-2}\log^2{n})$,
yielding via merge-and-reduce a streaming
algorithm requiring $O(k^2d\epsilon^{-2}\log^8 n)$ space.
Chen~\cite{Ch09} also presented a technique for general metric spaces,
which, with probability of success $1-\delta$, produces a coreset of size
$O(k\epsilon^{-2}\log n(k\log n+\log(1/\delta)))$.

To the best of our knowledge, there do not exist any results
to date for the $k$-median problem on arbitrary metric spaces
(often simply called the \emph{metric $k$-median} problem)
in the insertion-deletion streaming model.
In the geometric case, introduced by Indyk~\cite{Ind04}
under the name \emph{dynamic geometric data streams},
Frahling and Sohler~\cite{FS05}
have shown (via a technique distinct from that in~\cite{HPM04,HPK05})
that one can build a $(k,\epsilon)$-coreset for $k$-median or $k$-means
using $O(k^2 \epsilon^{-2d-4} \log^7 n)$ or $O(k \epsilon^{-d-2} \log n)$
space, respectively.

In comparison to the insertion-only and dynamic geometric streaming
models, little is known about the metric $k$-median problem in the sliding
window model, where the goal is to maintain a solution on the most
recent $W$ elements of the data stream.
To our knowledge, the only existing solution under this model
is the $O(2^{O(\frac{1}{\tau})})$-approximation given in~\cite{BDMO03},
where $\tau \in (0,1/2)$ is a user-specified parameter.
The solution presented therein requires $O(\frac{k}{\tau^4}W^{2\tau}\log^2 W)$
space and yields an initial solution using $2k$ centers,
which is then pared down to $k$ centers with no loss to the
approximation factor.

%---------------------------------------------------------------------------------------------------------------------------------------------------

\paragraph{Outline.}
The remainder of the paper is organized as follows:
Section~\ref{sec:prelims} establishes notation and definitions.
Our main results are presented in Section~\ref{sec:metric},
which gives an algorithm for the $k$-median problem on sliding windows,
and Section~\ref{sec:euclid}, which presents an algorithm for maintaining
coresets for Euclidean clustering problems on sliding windows.
Additional results and proof details are included in the Appendix.

%---------------------------------------------------------------------------------------------------------------------------------------------------

\section{Preliminaries}
\label{sec:prelims}
We begin by defining the clustering problems of interest
and establishing notation.

\subsection{Metric and Geometric $k$-Median Problems}
\label{prelim:def:k:median}
Let $(X,\dist)$ be a metric space where $X$ is a set of points and
$\dist:X\times X\rightarrow \REAL$
is a distance function defined over the points of $X$.
For a set $Q \subseteq X$,
we let $\dist(p,Q)=\min_{q \in Q} \dist(p,q)$ denote
the distance between a point $p\in X$ and set $Q$
and we denote by $\rho(Q)=\min_{p,q \in Q, p \neq q} \dist(p,q)$
the minimum distance between distinct points in $Q$.
We define $[a]=\{1,2,3,\cdots a\}$ and
$[a,b]=\{a,a+1,a+2,\cdots b\}$ for natural numbers $a\le b$.
When there is no danger of confusion,
we denote the set of points $\{p_a,p_{a+1},\dots,p_b\} \subset X$
by simply $[a,b]$.
For example, for function $f$ defined on sets of points,
we denote $f(\{p_a,p_{a+1},\dots,p_b\})$ by simply $f([a,b])$.

\begin{definition}[Metric $k$-median]
Let $P$ be a set of $n$ points in metric space $(X,d)$
and let $C=\{c_1,\dots,c_k\} \subseteq X$ be a set of $k$ points
called \emph{centers}.
A \emph{clustering} of point set $P$ using $C$ is a partition of $P$ such
that a point $p\in P$ is in partition $P_i$ if $c_i \in C$
is the nearest center in $C$ to $p$, with ties broken arbitrarily.
We call each $P_i$ a \emph{cluster}.
The $k$-median cost using centers $C$ is $\cost(P,C)=\sum_{p\in P} \dist(p,C).$
The metric $k$-median problem is to find a set $C^* \subset P$ of $k$ centers
satisfying
$ \cost(P,C^*) = \min_{C \subset P: |C|=k} \cost(P,C)$.
We let $\opt(P,k)=\min_{C \subset P: |C|=k} \cost(P,C)$ denote this
optimal $k$-median cost for $P$.
\end{definition}

\begin{definition}[(Euclidean) $k$-median Clustering]
Let $P$ be a set of $n$ points in a $d$-dimensional Euclidean Space $\REAL^d$ 
and $k$ be a natural number.
In the $k$-median problem,
the goal is to find a set $C=\{c_1,\cdots,c_k\} \subset \REAL^d$ of $k$ centers, 
that minimizes the cost
$ \cost(P,C) =\sum_{p\in P} \dist(p,C)$, 
where $\dist(p,C)=\min_{c_i\in C} \dist(p,c_i)$ is the Euclidean distance between $p$ 
and $c_i$. 
\end{definition}

\begin{definition}[$(k,\epsilon)$-Coreset for $k$-Median Clustering]
Let $P$ be a set of $n$ points in $d$-dimensional Euclidean Space $\REAL^d$ 
and let $k$ be a natural number. A set $S\subseteq \REAL^d$ is a 
$(k,\epsilon)$-coreset for $k$-median  clustering
if for every set $C=\{c_1,\cdots,c_k\} \subset \REAL^d$ of $k$ centers we have 
$
  |\cost(P,C)-\cost(S,C)|\le \epsilon \cdot\cost(P,C)
$.
\end{definition}

\subsection{The Sliding Window Model}
\label{prelim:sliding:window}
Let $(X,d)$ be a metric space and
$P \subseteq X$ a point set of size $|P|=n$.
In the insertion-only streaming model~\cite{AMS99,HPM04,COP03},
we think of a (possibly adversarial) permutation $p_1,p_2,\cdots,p_n$
of $P$, presented as a data stream.
We assume that we have some function $f$, defined on sets of points.
The goal is then to compute the (approximate) value of $f$
evaluated on the stream, using $o(n)$ space.
We say that point $p_N$ \emph{arrives} at time $N$.

The \emph{sliding window model}~\cite{DGIM02} is a generalization
of the insertion-only streaming model in which
we seek to compute function $f$ over only the most recent elements
of the stream.
Given a current time $N$, we consider a window $\mathcal{W}$ of size $W$
consisting of points $p_s,p_{s+1},\dots,p_N$, where $s=\max\{1,N-W+1\}$.
We assume that $W$ is such that we cannot store all of
window $\mathcal{W}$ in memory.
A point $p_i$ in the current window $\mathcal{W}$ is called \emph{active}.
At time $N$, point $p_i$ for which $i<N-W+1$
is called \emph{expired}.

\subsection{Smooth Functions and Smooth Histograms}
\label{prelim:smooth:function}
\begin{definition}[$(\epsilon,\epsilon')$-smooth function~\cite{BO06}]
Let $f$ be a function defined on sets of points,
and let $\epsilon,\epsilon' \in (0,1)$.
We say $f$ is an \emph{$(\epsilon,\epsilon')$-smooth function} if
$f$ is non-negative (i.e., $f(A)\ge 0$ for all sets $A$),
non-decreasing (i.e., for $A \subseteq B$, $f(A)\le f(B)$),
and polynomially bounded (i.e., there exists constant $c > 0$ such that
$f(A) = O(|A|^c)$ ) and for all sets $A,B,C$
\[
  f(B) \ge (1-\epsilon) f(A\cup B)
  \enspace \text{ implies } \enspace
  f(B\cup C)\ge (1-\epsilon') f(A\cup B \cup C) .
\]
\end{definition}
Interestingly, a broad class of functions fit this definition.
For instance, sum, count, minimum, diameter, $L_p$-norms, frequency moments
and the length of the longest subsequence are all smooth functions.

%----------------------------------------------------------------------------------------------------------------------------

Braverman and Ostrovsky~\cite{BO06}
proposed a data structure called \emph{smooth histograms}
to maintain smooth functions on sliding windows.

\begin{definition}[Smooth histogram~\cite{BO06}]
Let $0 < \epsilon < 1, 0 < \epsilon' < 1$ and $\alpha>0$,
and let $f$ be an $(\epsilon,\epsilon')$-smooth function.
Suppose that there exists an insertion-only streaming algorithm
$\mathcal{A}$ that computes an $\alpha$-approximation $f'$ of $f$.
The \emph{smooth histogram} consists of an increasing set of indices
$X_N=\{ x_1,x_2,\cdots,x_t=N \}$ and $t$ instances
$\mathcal{A}_1,\mathcal{A}_2,\cdots,\mathcal{A}_t$ of $\mathcal{A}$
such that
\begin{enumerate}
\item Either $p_{x_1}$ is expired and $p_{x_2}$ is active or $x_1=0$.
\item For $1 < i< t-1$ one of the following holds
        \begin{enumerate}
        \item $x_{i+1}=x_i+1$ and $f'([x_{i+1},N])\le (1-\epsilon')f'([x_{i},N])$,
        \item $f'([x_{i+1},N])\ge (1-\epsilon)f'([x_{i},N])$ and if $i\in [t-2]$, $f'([x_{i+2},N])\le (1-\epsilon')f'([x_{i},N])$.
        \end{enumerate}
\item $\mathcal{A}_i=\mathcal{A}([x_i,N])$ maintains $f'([x_i,N])$.
\end{enumerate}
\end{definition}
Observe that the first two elements of sequence $X_N$ always sandwich
the current window $\mathcal{W}$, in the sense that $x_1\le N-W\le x_2$.
Braverman and Ostrovsky~\cite{BO06} used this observation
to show that at any time $N$, one of either $f'([x_1,N])$ or $f'([x_2,N])$
is a good approximation to $f'([N-W,N])$, and is thus a good approximation
to $f([N-W,N])$.
In particular, they proved the following theorem.
\begin{theorem}[\cite{BO06}]
\label{thm:BO06}
Let $0< \epsilon,\epsilon'<1$ and $\alpha,\beta>0$,
and let $f$ be an $(\epsilon,\epsilon')$-smooth function.
Suppose there exists an insertion-only streaming algorithm
$\mathcal{A}$ that calculates an $\alpha$-approximation
$f'$ of $f$ using $g(\alpha)$ space and $h(\alpha)$ update time.
Then there exists a sliding window algorithm that maintains a
$(1\pm (\alpha+\epsilon))$-approximation of $f$ using
$O(\beta^{-1}\cdot\log n \cdot(g(\alpha)+\log n))$ space
and $O(\beta^{-1}\cdot\log n\cdot h(\alpha))$ update time.
\end{theorem}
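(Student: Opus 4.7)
The plan is to construct the smooth histogram data structure described in the preceding definition and show three things: (a) it correctly maintains the required invariants under stream updates, (b) the number $t$ of active instances is always $O(\beta^{-1} \log n)$, and (c) from any valid histogram one can extract a $(1 \pm (\alpha + \epsilon))$-approximation of $f$ on the active window.

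\textbf{Maintenance.} When a new point $p_N$ arrives I would first append a new instance $\mathcal{A}_t = \mathcal{A}(\{p_N\})$ and then feed $p_N$ to every existing instance $\mathcal{A}_i$, updating the approximations $f'([x_i, N])$. I would then enforce the ``bucket'' invariant by scanning the indices: whenever three consecutive indices $x_{i}, x_{i+1}, x_{i+2}$ satisfy $f'([x_{i+2}, N]) \ge (1-\epsilon) f'([x_i, N])$, I would delete $x_{i+1}$ and its instance (this is exactly the smooth-histogram merging rule, using the hypothesis that going from $[x_{i+1}, N]$ to $[x_i, N]$ changes $f'$ by at most a $(1-\epsilon)$ factor). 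When the oldest index $x_1$ becomes expired (i.e.\ the stored point corresponding to $x_1$ has timestamp $< N - W + 1$), I would drop it.

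\textbf{Bounding the number of instances.} By the merge rule, every pair of \emph{non-adjacent} surviving indices $x_i, x_{i+2}$ satisfies $f'([x_{i+2}, N]) < (1-\epsilon') f'([x_i, N])$, so the values $f'([x_1, N]), f'([x_3, N]), f'([x_5, N]), \ldots$ decrease geometrically by a factor of $(1-\epsilon')$. Combining this with the polynomial-boundedness of $f$ (so $f' = O(n^c)$ for some constant $c$), a standard geometric-series argument yields $t = O(\epsilon'^{-1} \log n)$, which is the source of the $\beta^{-1} \log n$ factor in the resource bounds. The space and update-time bounds then follow immediately from running $t$ parallel copies of $\mathcal{A}$, plus $O(\log n)$ bits per instance to store the index $x_i$.

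\textbf{Approximation quality.} At any query time $N$, the first two surviving indices sandwich the window: $x_1 \le N - W + 1 \le x_2$, by the maintenance rule that $x_1$ is either $0$ or corresponds to an expired point while $x_2$ is active. There are two cases. If $x_2 = N - W + 1$ exactly, then $f'([x_2, N])$ is already an $\alpha$-approximation of $f([N-W+1, N])$. Otherwise the previous histogram update step guarantees $f'([x_2, N]) \ge (1-\epsilon) f'([x_1, N]) \ge (1-\epsilon) f'([N-W+1, N] \cup [x_1, N-W])$ (using monotonicity together with the $(1-\epsilon)$-check that prevented $x_1$ from being merged into $x_2$). Applying the defining smoothness implication with $A = [x_1, N-W]$, $B = [N-W+1, ?]$, and $C$ equal to the suffix, I can conclude $f([x_2, N])$ is within $(1 \pm \epsilon')$ of $f([N-W+1, N])$. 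Composing the $(1 \pm \alpha)$ guarantee of $\mathcal{A}$ with the smoothness-induced $(1 \pm \epsilon)$ guarantee yields the claimed $(1 \pm (\alpha + \epsilon))$ bound (up to lower-order terms that can be absorbed by rescaling the parameters).

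The main obstacle, and the place that requires the most care, is step (c): one must invoke smoothness in exactly the right direction, with $B$ chosen as the suffix $[x_2, N]$ and $A$ chosen as the expired prefix $[x_1, x_2 - 1]$, to transfer the $(1-\epsilon)$ relationship measured at time $N$ into a guarantee about the true window $[N-W+1, N]$; everything else (the merging rule, the geometric counting argument, and the resource accounting) is essentially bookkeeping.
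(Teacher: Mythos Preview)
The paper does not prove Theorem~\ref{thm:BO06} at all: it is quoted verbatim as a result of Braverman and Ostrovsky~\cite{BO06} and used as a black box, so there is no ``paper's own proof'' against which to compare your attempt. Your outline is in fact a reasonable reconstruction of the original smooth-histogram argument from~\cite{BO06}: maintain parallel instances, merge when estimates come within a $(1-\epsilon)$ factor, bound $t$ by the geometric decay of $f'$ combined with polynomial boundedness, and read off the answer from the instance that sandwiches the window.

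One point in your step~(c) is muddled. You invoke smoothness with $A=[x_1,N-W]$ and $B=[N-W+1,?]$, but those are not the sets on which you have any measured relation. The correct instantiation is $A=[x_1,x_2-1]$, $B=[x_2,M]$, and $C=[M+1,N]$, where $M$ is the earlier time at which the merge was performed (i.e.\ when you observed $f'([x_2,M])\ge(1-\epsilon)f'([x_1,M])$); smoothness then propagates this to time $N$, giving $f([x_2,N])\ge(1-\epsilon')f([x_1,N])$. You partly correct yourself in the closing paragraph, but even there you write $B=[x_2,N]$ rather than $[x_2,M]$. Once that relation is established, the query-time argument is pure monotonicity: since $[x_2,N]\subseteq[N-W+1,N]\subseteq[x_1,N]$, the value $f([N-W+1,N])$ is sandwiched between $f([x_2,N])$ and $f([x_1,N])$, which differ by at most a $(1-\epsilon')$ factor; composing with the $\alpha$-approximation of $\mathcal{A}$ finishes the bound. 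No further use of smoothness is needed at query time.
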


\paragraph{VC-Dimension and $\epsilon$-Sample.}
\label{sec:vc:dim}
We briefly review the definition of VC-dimension and $\epsilon$-sample
as presented in Alon and Spencer's book \cite{as-pm} in Appendix \ref{appendix:euclid}.

%---------------------------------------------------------------------------------------------------------------------------------------------------
\section{Metric $k$-Median Clustering in Sliding Windows}
\label{sec:metric}

We introduce the first polylogarithmic-space $O(1)$-approximation for
metric $k$-median clustering in the sliding window model.
Our algorithm requires $O(k^3 \log^6 n)$ space and
$O( \poly(k,\log n) )$ update time.
We note that our algorithm is easily modified to accommodate $k$-means
clustering.  This modified algorithm will have the same time and space bounds
with a larger approximation ratio that is nevertheless still $O(1)$.

\subsection{Smoothness}

$k$-median and $k$-means are not smooth (see the Appendix for an example),
so the techniques of~\cite{BO06} do not apply directly,
but Lemma~\ref{bounding} shows that $k$-median clustering does
possess a property similar to smoothness.

\begin{definition}[$\tri$-approximate Triangle Inequality]
Non-negative symmetric function
$d~:~X~\times~X~\rightarrow~\mathbb{R}_{\ge 0}$
satisfies the \emph{$\tri$-approximate triangle inequality}
if $d(a,c)\le\tri \left( d(a,b)+d(b,c) \right)$
for every $a,b,c\in X$.
\end{definition}
We note that a metric $d$ satisfies the $1$-approximate
triangle inequality by definition
and that for any $p \ge 1$,
$d^p$ obeys the $2^{p-1}$-approximate triangle inequality,
since
$(x+y)^p \le 2^{p-1}(x^p+y^p)$
for all non-negative $x,y$ and $p \ge 1$.
Thus, if $p = O(1)$, Theorem~\ref{mainTheorem}
provides an $O(1)$-approximation
for the clustering objective $\sum d^p(x,C)$.
The case $p=2$ yields an $O(1)$-approximate solution for $k$-means.

\begin{definition}
Let $P$ and $C = \{c_1,\dots,c_k\}$
be sets of points from metric space $(\mathcal{X},d)$.
A map $t : P \rightarrow C$ is called a
\emph{clustering map} of $P$ for set $C$.
If $\sum_{x \in P} d(x,t(x)) \le \beta \cdot \opt(P,k)$,
then we say that $t$ is a \emph{$\beta$-approximate clustering map} of $P$
for set $C$.
The difference between a clustering map $t(x)$ and the intuitive map
$\arg \min_{c \in \{c_1, \dots, c_k\}} d(x,c)$ is that $t$ need not map each
point $x$ to its nearest center.
\end{definition}

\begin{lemma}\label{bounding}
Let $d$ be a non-negative symmetric function on $X \times X$
satisfying the $\tri$-approximate triangle inequality
and let $A,B \subset X$ be two disjoint sets of points
such that
\begin{enumerate}
\item $\opt(A\cup B,k) \le \gamma \opt(B,k)$. \label{cond1:cost}
\item There exists $\beta$-approximate clustering map $t$ of $A\cup B$
	such that
	$\forall i\in[k] :|t^{-1}(c_i)\cap A|\le |t^{-1}(c_i)\cap B|$.
	\label{cond2:card}
\end{enumerate}
\vspace{-5mm}
Then for any $C \subseteq X$ we have
$ \opt(A\cup B\cup C,k) \le (1+\tri+\beta \gamma\tri)\opt(B\cup C,k). $
\end{lemma}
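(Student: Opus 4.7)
The plan is to upper bound $\opt(A\cup B\cup C,k)$ by exhibiting an explicit feasible clustering and analyzing its cost. Let $C^{*}$ be an optimal $k$-center set for $B\cup C$, so that $\sum_{x\in B\cup C} d(x,C^{*})=\opt(B\cup C,k)$. Since $C^{*}\subseteq B\cup C\subseteq A\cup B\cup C$, the set $C^{*}$ is a feasible $k$-center set for the enlarged instance, and the points of $B\cup C$ already contribute exactly $\opt(B\cup C,k)$. What remains is to charge the points of $A$ to centers in $C^{*}$ and bound their aggregate cost by $(\tri+\tri\beta\gamma)\,\opt(B\cup C,k)$.

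The key device is to use condition~\ref{cond2:card} to build an injection $\phi:A\to B$ satisfying $t(\phi(a))=t(a)$ for every $a\in A$. Such an injection exists because the cardinality hypothesis is precisely Hall's condition applied cluster-by-cluster: inside each preimage $t^{-1}(c_{i})$ we match $A$-points to distinct $B$-points. For $a\in A$, I then charge $a$ to the center $c^{*}_{\phi(a)}\in C^{*}$ that $\phi(a)$ is assigned to under the optimal clustering of $B\cup C$. One application of the $\tri$-approximate triangle inequality gives
\[
d(a,C^{*}) \;\le\; d(a,c^{*}_{\phi(a)}) \;\le\; \tri\, d(a,\phi(a)) \;+\; \tri\, d(\phi(a),c^{*}_{\phi(a)}).
\]
Summing over $a\in A$, injectivity of $\phi$ makes the second term at most $\tri\cdot\opt(B\cup C,k)$, since each $\phi(a)$ is charged its own cost in the optimal solution of $B\cup C$ at most once. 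For the first term, I would use $t(a)=t(\phi(a))$ and apply the $\tri$-approximate triangle inequality at the common center to obtain $d(a,\phi(a))\le \tri(d(a,t(a))+d(\phi(a),t(\phi(a))))$; summing, and using injectivity of $\phi$ to ensure each $B$-point is counted at most once, this total is bounded by (a constant factor times) the map-cost $\sum_{x\in A\cup B} d(x,t(x))\le \beta\,\opt(A\cup B,k)$.

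To finish, I would chain the two hypotheses: condition~\ref{cond1:cost} converts $\beta\,\opt(A\cup B,k)$ to $\beta\gamma\,\opt(B,k)$, and monotonicity of the optimal cost (adding points cannot decrease it) gives $\opt(B,k)\le\opt(B\cup C,k)$. Combining the $\opt(B\cup C,k)$ contribution from $B\cup C$, the $\tri\,\opt(B\cup C,k)$ contribution from the second triangle term, and the $\beta\gamma\tri\,\opt(B\cup C,k)$ contribution from the first term, yields the bound $(1+\tri+\beta\gamma\tri)\,\opt(B\cup C,k)$ as claimed. The main technical obstacle is the construction and use of the injection $\phi$: without it, the cardinality hypothesis~\ref{cond2:card} cannot be converted into a bound that charges each $B$-point only once, and the argument would lose a multiplicative factor scaling with the cluster imbalance. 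Once $\phi$ is in hand, the rest is a careful bookkeeping of the $\tri$-approximate triangle inequality and monotonicity.
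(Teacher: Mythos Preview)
Your proposal is correct and follows essentially the same route as the paper: both take optimal centers for $B\cup C$, use the cardinality hypothesis to build a cluster-preserving injection $A\to B$ (the paper's $g_i$ is your $\phi$), route each $a\in A$ through its matched $B$-point via the $\tri$-approximate triangle inequality, and then bound the matching distances by the $\beta$-approximate map cost together with condition~\ref{cond1:cost} and $\opt(B,k)\le\opt(B\cup C,k)$. The only cosmetic difference is that the paper phrases the injection per cluster while you phrase it globally; the bookkeeping and the resulting constant are the same.
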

\begin{proof}
Let $s$ be an optimal clustering map for $B\cup C$
(i.e., $s$ is $1$-approximate). Then
$$ \opt(A \cup B \cup C,k) \le \sum_{a \in A} d(a, s(a))
			+ \sum_{x \in B \cup C} d(x, s(x)). $$
The second term is $\opt(B\cup C,k)$,
and we can bound the first term
by connecting each element of $t^{-1}(c_i)\cap A$
to a unique element in $t^{-1}(c_i)\cap B$,
applying the $\tri$-approximate triangle inequality,
and using the fact that $|t^{-1}(c_i)\cap A|\le |t^{-1}(c_i)\cap B|$
for all $i \in [k]$.
Details are provided in the Appendix.
\end{proof} 

If we could ensure that the inequality conditions
required by Lemma~\ref{bounding} hold, then we could apply the ideas from
smooth histograms. The following two lemmas suggest a way to do this.

\begin{lemma} \label{blackbox}
Let $A \cup B$ be a set of $n$ points received in an insertion-only
stream, appearing in two phases
so that all points in $A$ arrive before all points in $B$,
and assume that the algorithm is notified when the last point from $A$ arrives.
Using $O(k \log^2 n)$ space, it is possible to compute an
$O(1)$-approximate clustering map $t$ for $A \cup B$
as well as the exact values of
$\{(|t^{-1}(c_i) \cap A|, |t^{-1}(c_i) \cap B|)\}_{i \in [k]}$.
\end{lemma}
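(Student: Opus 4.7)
The plan is to run the Charikar--O'Callaghan--Panigrahy streaming $k$-median algorithm~\cite{COP03} on the combined stream $A \cup B$ as a black box, while augmenting each of its $O(k \log n)$ candidate facilities with two $O(\log n)$-bit counters: an $A$-counter $\alpha_f$ and a $B$-counter $\beta_f$. Whenever COP03 would increment the weight of a facility $f$ because a new point is assigned to it, I would instead increment $\alpha_f$ if the notification has not yet fired and $\beta_f$ afterwards. Whenever COP03 merges facilities at a phase transition (absorbing $f$ into $f'$), I would add $\alpha_f$ to $\alpha_{f'}$ and $\beta_f$ to $\beta_{f'}$. This preserves the invariant that $\alpha_f$ (respectively $\beta_f$) counts the $A$-points (respectively $B$-points) that COP03 has routed to $f$ through all intermediate merges.

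At the end of the stream COP03 holds $O(k \log n)$ weighted facilities $F$; running an offline $O(1)$-approximate $k$-median on $F$ weighted by $\alpha_f + \beta_f$ produces the final centers $C = \{c_1, \dots, c_k\}$ together with an assignment $\psi : F \to C$. For each $x \in A \cup B$, let $\phi(x) \in F$ be the facility COP03 ultimately holds $x$ in; I would then define the clustering map by composition, $t(x) := \psi(\phi(x))$. By construction,
\[
|t^{-1}(c_i) \cap A| = \sum_{f \in \psi^{-1}(c_i)} \alpha_f, \qquad |t^{-1}(c_i) \cap B| = \sum_{f \in \psi^{-1}(c_i)} \beta_f,
\]
and both quantities are exact and can be read off at the end of the stream.

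For the approximation guarantee I would apply the triangle inequality in two stages:
\[
\sum_x d(x, t(x)) \le \sum_x d(x, \phi(x)) + \sum_f (\alpha_f + \beta_f)\, d(f, \psi(f)).
\]
The first term is the cost of COP03's internal point-to-facility assignment, which is $O(\opt(A \cup B, k))$ by~\cite{COP03}. The second term is the cost of the offline $k$-median on the weighted set $F$, which is $O(1)$-approximate to the optimal weighted $k$-median of $F$; that in turn is $O(\opt(A \cup B, k))$ because the weighted summary $F$ already $O(1)$-approximates the $k$-median cost of $A \cup B$ (this is exactly what COP03 exploits to output its final solution). Hence $t$ is an $O(1)$-approximate clustering map on $A \cup B$.

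The augmentation adds only $O(\log n)$ bits per facility for the two counters, over $O(k \log n)$ facilities, yielding an $O(k \log^2 n)$ overhead that matches the ambient COP03 cost. The main obstacle I foresee is verifying that COP03's internal hierarchy really does expose a single, well-defined facility $\phi(x)$ for each input $x$, so that phase transitions combine the counters additively without any double counting; this requires some care with the specifics of COP03's merging step, but no new algorithmic ideas. The flexibility built into the definition of \emph{clustering map} (which does not force $t(x)$ to be the nearest center in $C$) is essential here, because tracking nearest-center assignments for the unknown final set $C$ would force us to store every point of $A$.
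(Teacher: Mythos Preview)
Your approach is correct and follows the same high-level strategy as the paper: compose the \textPLS point-to-facility map with an offline $\xi$-approximation's facility-to-center map, and track phase membership through the facility weights. The only real difference is in how the phase information is kept. You run a \emph{single} \textPLS instance on $A\cup B$ and thread two counters $\alpha_f,\beta_f$ through every facility merge; the paper instead takes a \emph{snapshot} $S_A$ of \textPLS after $A$ finishes and starts a \emph{fresh} \textPLS instance on $B$ to obtain $S_B$, then runs the offline approximation on the weighted union $S_A\cup S_B$. Because the two summaries are disjoint by construction, the paper reads off $|t^{-1}(c_i)\cap A|$ directly as the (weighted) size of $q^{-1}(c_i)\cap S_A$, with no need to reason about how counters propagate through COP03's internal phase transitions. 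This sidesteps exactly the obstacle you flag at the end of your proposal---verifying that merges combine counters additively without double counting---at the modest price of running two \textPLS instances instead of one, which does not change the $O(k\log^2 n)$ space bound. Both implementations yield the same $O(1)$-approximate clustering map via the same triangle-inequality decomposition you wrote down.
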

\begin{proof}
Given a set of points $P$ presented in a stream $D$,
the \textPLS algorithm presented in~\cite{COP03}
uses $O(k \log^2 n)$ space to compute a weighted set $S$
such that $\cost(D,S) \le \alpha \opt(D,k)$ for some constant $\alpha$
(\cite{COP03}, Theorem 1).
Using a theorem from~\cite{GMMO00}, it is shown in~\cite{COP03} that
$\opt(S,k) \le 2(1+\alpha)\opt(D,k)$.
It follows immediately that
running an offline $\xi$-approximation for $k$-median on $S$
yields a set of $k$ centers that constitutes an
$(\alpha + 2\xi(1+\alpha))$-approximate $k$-median solution
for the original stream $D$.

The \textPLS algorithm uses as a subroutine
the online facility location algorithm due to Meyerson~\cite{M01}.
Thus, each point in $S$ can be viewed as a facility serving
one or more points in $P$.
Therefore, running the \textPLS algorithm on stream $D$
yields a map $r: P \rightarrow S$ such that $r(p) \in S$
is the facility that serves point $p \in P$.
Running a $\xi$-approximation on the set $S$
to obtain centers $\{ c_1, \dots, c_k\}$,
yields a map $q: S \rightarrow \{ c_1, \dots, c_k\}$
such that point $s \in S$ is connected to $q(s)$.

Given disjoint multisets $A$ and $B$, \textPLS
yields maps $r_{A} : A \rightarrow S_A$ and $r_B : B \rightarrow S_B$.
Running the $\xi$-approximation on $S_A \cup S_B$,
we obtain a map
$q : S_{A} \cup S_{B} \rightarrow \{ c_1, \dots, c_k\}$,
from which we have a $\beta$-approximate clustering map of $A \cup B$
given by $t(x) = q(r_A(x))$ if $ x \in A$
and $t(x) = q(r_B(x))$ if $ x \in B$,
from which we can directly compute
$|t^{-1}(c_i) \cap A| = |q^{-1}(c_i) \cap S_A|$
and similarly for $|t^{-1}(c_i) \cap B|$.
\end{proof}

The previous lemma showed how we can check whether condition~\ref{cond2:card}
in Lemma~\ref{bounding} holds over two phases.
We now extend this to the case where the stream has an arbitrary
number of phases.

\begin{lemma} \label{maintainSizes}
Let $A_1 \cup \dots \cup A_Z$ be a set of $n$ points in an
insertion-only stream, arriving in phases $A_1,A_2,\dots,A_Z$,
and assume that the algorithm is notified at the end of each phase $A_i$.
Using $O(Z^2 k \log^2 n)$ space, one can compute
for every $1 \le j < \ell \le Z$ a $\beta$-approximate clustering
map $t_{j,\ell}$ for $A_1 \cup \dots \cup A_Z$ and the exact values of
$\{|t_{j,\ell}^{-1}(c_i) \cap (A_{j} \cup \dots \cup A_{\ell-1})|,
|t_{j,\ell}^{-1}(c_i) \cap (A_{j} \cup \dots \cup A_Z)|\}_{i \in [k]}$
for that map.
\end{lemma}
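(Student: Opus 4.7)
The plan is to generalize the two-phase construction of Lemma~\ref{blackbox} in the most direct way: run one dedicated copy of \textPLS per interval of phases, and for each query pair $(j,\ell)$ glue together the two relevant copies exactly as in the proof of Lemma~\ref{blackbox}.

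For every pair of phase indices $(a,b)$ with $1 \le a \le b \le Z$, I would launch a fresh \textPLS instance at the arrival of the first point of $A_a$ and freeze it once the last point of $A_b$ has been processed; the notifications at phase boundaries are what allow us to do this. Each frozen instance produces a weighted sketch $S_{a,b}$ together with the facility assignment $r_{a,b} : A_a \cup \dots \cup A_b \rightarrow S_{a,b}$ described in Lemma~\ref{blackbox}. The total number of instances is $\binom{Z+1}{2} = O(Z^2)$, and by Theorem~1 of~\cite{COP03} each uses $O(k \log^2 n)$ space, meeting the claimed $O(Z^2 k \log^2 n)$ budget.

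Given a query pair $(j,\ell)$ with $j < \ell \le Z$, I would apply the two-phase construction of Lemma~\ref{blackbox} to the disjoint multisets $A = A_j \cup \dots \cup A_{\ell-1}$ and $B = A_\ell \cup \dots \cup A_Z$, using the already-maintained sketches $S_{j,\ell-1}$ and $S_{\ell,Z}$ in place of $S_A$ and $S_B$. Running the offline $\xi$-approximation on $S_{j,\ell-1} \cup S_{\ell,Z}$ yields centers $\{c_1,\dots,c_k\}$ and an assignment $q$ of this union to them; setting $t_{j,\ell}(x) = q(r_{j,\ell-1}(x))$ on $A$ and $t_{j,\ell}(x) = q(r_{\ell,Z}(x))$ on $B$ produces the desired $\beta$-approximate clustering map of $A_j \cup \dots \cup A_Z$ by the exact argument of Lemma~\ref{blackbox}. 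The cardinalities drop out of the facility weights: $|t_{j,\ell}^{-1}(c_i) \cap A| = |q^{-1}(c_i) \cap S_{j,\ell-1}|$, and $|t_{j,\ell}^{-1}(c_i) \cap (A_j \cup \dots \cup A_Z)|$ is this quantity plus $|q^{-1}(c_i) \cap S_{\ell,Z}|$, both fully determined by the offline run and the sketches.

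The main obstacle to improving this argument is that a single \textPLS instance cannot simultaneously provide both a ``prefix'' sketch $S_{j,\ell-1}$ and a ``supersequence'' sketch $S_{j,Z}$ of the same run, because its state is destructively updated with every insertion and cannot be rewound. This forces a separate instance for every choice of right-endpoint, which is precisely what drives the space to $O(Z^2 k \log^2 n)$ rather than $O(Z k \log^2 n)$. All remaining steps---freezing at phase boundaries, pairing the two sketches, running the offline approximation, and extracting cardinalities---are straightforward lifts of the two-phase proof.
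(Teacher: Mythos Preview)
Your proposal is correct and follows essentially the same approach as the paper: maintain \textPLS sketches for all $O(Z^2)$ phase intervals, and for a query $(j,\ell)$ run the offline approximation on $S_{j,\ell-1} \cup S_{\ell,Z}$ exactly as in Lemma~\ref{blackbox}.

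The one minor difference is in how the $O(Z^2)$ sketches are produced. You launch a fresh \textPLS instance for every pair $(a,b)$; the paper instead runs only $Z$ instances (one starting at each phase boundary) and, at every subsequent phase transition, \emph{snapshots} each running instance by copying its current state while letting the instance continue. So your closing remark that a single \textPLS run ``cannot simultaneously provide both a prefix sketch $S_{j,\ell-1}$ and a supersequence sketch $S_{j,Z}$'' is not quite right: it can, via copy-and-continue. That said, this refinement does not improve the space bound---storing $O(Z^2)$ snapshots already costs $O(Z^2 k \log^2 n)$---it only reduces the number of live \textPLS instances being fed each incoming point from $O(Z^2)$ to $O(Z)$, a time saving rather than a space one.
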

\begin{proof}
This lemma is a natural extension of Lemma~\ref{blackbox}.
Details of the proof are in the Appendix.
\end{proof}

Lemma~\ref{maintainSizes} suggests one way to ensure that the conditions
of Lemma~\ref{bounding} are met-- simply treat every point
as a phase-- but this would require running
$O(W^2)$ instances of \textPLSns, which would be infeasible.
We would like to ensure that the
conditions~\ref{cond1:cost} and~\ref{cond2:card} in~\ref{bounding} hold,
while running at most $T \ll W$ many instances of \textPLSns.
We can achieve this by starting a new phase only when one of
these conditions would otherwise be violated.
We will show in Lemma~\ref{spaceBound} that this strategy incurs only
polylogarithmically many phases.

%----------------------------------------------------------------------------------------------------------------------------

\subsection{Algorithm for Sliding Windows}
\label{SW-section}
Algorithm~\ref{alg:kmedian:approx:SW} produces an approximate $k$-median
solution on sliding windows.
The remainder of the section will establish properties of this algorithm,
culminating in the main result given in Theorem~\ref{mainTheorem}.

The bulk of the bookkeeping required by Algorithm~\ref{alg:kmedian:approx:SW}
is performed by the {\sc Update} subroutine, defined in
Algorithm~\ref{alg:kmedian:update}.
In the spirit of~\cite{BO07},
central to our approach is a set of indices $\{X_1,X_2,\dots,X_T\}$.
Each $X_i$ is the arrival time of a certain point from the stream.
Algorithm~\ref{alg:kmedian:approx:SW} runs $O(T)$ instances of \textPLS
on the stream, with the $i$-th instance running starting
with point $p_{X_i}$. Denote this instance by $\mathcal{A}(X_i)$.
The \textPLS algorithm on input $P = \{p_i,p_{i+1},\dots,p_j\}$
constructs a weighted set $S$ and a score $R$ such that
$\opt(P,k) \le \cost(P,S) \le R \le \alpha \opt(P,k)$.
To check the smoothness conditions in Lemma~\ref{bounding}
we will use the solutions built up by certain instances of \textPLSns.
We keep an array of $O(T^2)$ buckets, indexed as
$B(X_i,X_j)$ for $1 \le i < j \le T$.
In each bucket we store $B(X_i,X_j) = (S_{ij},R_{ij})$ where
$S_{ij} = S(X_i,X_j)$ and $R_{ij} = R(X_i,X_j)$ are, respectively,
the weighted set and the cost estimate produced by an instance
of \textPLS running on the substream $\{p_{X_i},\dots,p_{X_j} \}$.

Concretely, we run instance $\mathcal{A}(X_i)$ on the stream
starting with point $p_{X_i}$. At certain times, say time $N$,
we will need to store the solution currently built up by this instance.
By this we mean that we copy the weighted set and
cost estimate as constructed by $\mathcal{A}(X_i)$ on points
$\{p_{X_i},\dots,p_N\}$ and store them in bucket $B(X_i,N)$.
We denote this by $B(X_i,N) \gets \store(\mathcal{A}(X_i))$.
Instance $\mathcal{A}(X_i)$ continues running after this operation.
We can view each $B(X_i,X_j)$ as a snapshot of the \textPLS
algorithm as run on points $\{p_{X_i},\dots,p_{X_j}\}$.
As necessary, we terminate \textPLS instances and initialize new ones over the
course of the algorithm.
We assume an offline $k$-median $O(1)$-approximation
algorithm $\mathcal{M}$, and denote by $\mathcal{M}(P)$
the centers returned by running this algorithm on point set $P$.

\begin{algorithm*}
\textbf{Input:} A stream of points $D = \{p_1,p_2,\dots,p_n\}$ from
metric space $(\mathcal{X},\dist)$, window size $W$

\noindent
\textbf{Update Process, upon the arrival of new point $p_N$:}

\begin{algorithmic}[1]
	\FOR{$i = 1,2,\dots,T$}
		\STATE $B(X_i,N) \gets \store( \mathcal{A}(X_i) )$
	\ENDFOR
	\STATE Begin running $\mathcal{A}(N)$
	\STATE \update()
\end{algorithmic}

\noindent
\textbf{Output:}
Return the centers and cost estimate from bucket $B(X_1, N)$

\caption{Metric $k$-median in Sliding Windows}
\label{alg:kmedian:approx:SW}
\end{algorithm*}

%----------------------------------------------------------------------------------------------------------------------------

\begin{algorithm*}

\begin{algorithmic}[1]
	\STATE If $X_2 > N-W$, then $i \gets 1$. Otherwise, $i \gets 2$.
	\WHILE{$i \le T$} \label{dropOPT}
		\STATE $j \gets$ the maximal $j'$ such that
			$\beta R(X_i,N) \le \gamma R(X_{j'},N)$.
			If none exist, $j \gets i+1$ \label{assignJ}
		\STATE $C \gets \mathcal{M}( S(X_i,X_j) )$
		\WHILE{$i < j$} \label{dropsize}
			\STATE $\KHmark(X_i)$
			\STATE $i \gets$ the maximal $\ell$ such that 
	$|t^{-1}(c)\cap \mathcal{S}_{i,\ell}| \le |t^{-1}(c)\cap \mathcal{S}_{\ell,T}|$ for all $c \in C$. If none exist, $i \gets i+1$ \label{getMax}
		\ENDWHILE
	\STATE $\KHmark(X_j)$
	\STATE $i \gets j + 1$
	\ENDWHILE \label{endDropOPT}
	\STATE For all unmarked $X_i$, terminate instance $\mathcal{A}(X_i)$
	\STATE Delete all buckets $B(X_i,X_j)$ for which either $X_i$ or $X_j$
		is unmarked.
	\STATE Delete all umarked indices $X_i$;
		relabel and unmark all remaining indices.
\end{algorithmic}
\caption{ {\sc Update} : prevents Algorithm~\ref{alg:kmedian:approx:SW}
	from maintaining too many buckets. }
\label{alg:kmedian:update}
\end{algorithm*}

\begin{lemma} \label{approxCost}
For any index $m \le N$, let $s$ be the maximal index such that
$[m,N] \subseteq [X_s, N]$.
Then $\opt([m,N]) \le (2+\beta \gamma) \opt([X_s, N])$.
\end{lemma}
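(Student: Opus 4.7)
The plan is to prove the inequality by the standard ``project optimal centers onto a subset'' argument for discrete metric $k$-median; the slack in the constant $(2+\beta\gamma)$ comes for free from the tighter $\le 2$ bound this argument yields, and the constant is presumably chosen to match the one appearing in Lemma~\ref{bounding}, which governs the reverse direction $\opt([X_s,N])\le(2+\beta\gamma)\opt([m,N])$ used elsewhere in the analysis. In particular, the proof below does not itself invoke Lemma~\ref{bounding}; it only needs the triangle inequality and the discreteness of the allowed center set.

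First I would record that $X_s\le m$ implies $[m,N]\subseteq[X_s,N]$. Let $C^*\subseteq[X_s,N]$ with $|C^*|=k$ achieve $\opt([X_s,N])$, and let $\pi:[X_s,N]\to C^*$ assign each point to its nearest element of $C^*$. To manufacture a feasible solution for $[m,N]$, I would, for each $c\in C^*$, pick a point $c'\in[m,N]$ nearest to $c$ and set $C'=\{c':c\in C^*\}\subseteq[m,N]$, padded if necessary to exactly $k$ elements of $[m,N]$ (which can only lower cost).

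The key step is the routine triangle-inequality estimate: for every $p\in[m,N]$, $d(p,c'_{\pi(p)})\le d(p,\pi(p))+d(\pi(p),c'_{\pi(p)})\le 2\,d(p,\pi(p))$, where the second inequality follows from the minimizing choice of $c'_{\pi(p)}$ together with $p\in[m,N]$, which gives $d(\pi(p),c'_{\pi(p)})\le d(\pi(p),p)$. Summing over $[m,N]$ and then extending the sum to $[X_s,N]\supseteq[m,N]$ yields $\cost([m,N],C')\le 2\,\opt([X_s,N])$. Thus $\opt([m,N])\le\cost([m,N],C')\le 2\,\opt([X_s,N])\le(2+\beta\gamma)\,\opt([X_s,N])$.

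The only real obstacle is that a factor strictly greater than $1$ is genuinely needed under the discrete-centers restriction: take $P=\{p_1,p_2,p_3,c\}$ with $d(p_i,c)=1$ for all $i$ and $d(p_i,p_j)=2$ for $i\ne j$ (a valid metric, with the triangle inequalities holding and some being tight), which has $\opt(P)=3$ via center $c$, whereas the subset $P\setminus\{c\}$ has $\opt=4$ (any $p_i$ as center). This example rules out any monotone $\opt([m,N])\le\opt([X_s,N])$ bound, so the above projection argument---with its inherent factor of $2$---really is the right tool here. Since $(2+\beta\gamma)\ge 2$, the claimed bound follows with slack.
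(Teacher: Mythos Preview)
Your projection argument is correct for the inequality \emph{as literally written}, and in fact yields the stronger constant $2$ rather than $2+\beta\gamma$. The problem is that the lemma statement contains a typo: the intended inequality is the reverse,
\[
\opt([X_s,N]) \;\le\; (2+\beta\gamma)\,\opt([m,N]),
\]
bounding the cost of the \emph{larger} interval by that of the \emph{smaller} one. This is the direction actually established in the paper's proof (both the sketch and the appendix conclude with $\opt([X_i,Q]\cup C)\le(2+\beta\gamma)\opt([m,Q]\cup C)$), and it is the direction invoked in Theorem~\ref{mainTheorem}, which needs $\opt([X_1,N])\le(2+\beta\gamma)\opt(\mathcal{W})$ with $\mathcal{W}\subseteq[X_1,N]$. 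You yourself noted that ``the reverse direction \dots [is] used elsewhere in the analysis,'' but then proved the direction that is not needed.

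Your argument cannot be salvaged to give the reversed inequality: as your own $4$-point example shows, removing points can \emph{increase} the discrete $k$-median cost, so no generic subset argument works. The reversed bound holds here only because the \textsc{Update} procedure is engineered so that whenever an index $m$ is deleted at some time $Q$, the surviving neighboring index satisfies the two hypotheses of Lemma~\ref{bounding} (the cost ratio via Line~\ref{assignJ} and the per-cluster cardinality condition via Line~\ref{getMax}). Applying Lemma~\ref{bounding} with $\lambda=1$ then yields the constant $1+\lambda+\beta\gamma\lambda=2+\beta\gamma$ for any suffix $C$, and taking $C=[Q{+}1,N]$ gives the claim. So the proof genuinely rests on Lemma~\ref{bounding} together with the algorithm's bookkeeping, which your approach bypasses entirely.
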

\begin{proof}
If $X_s = m$, there is nothing to prove, so assume $X_s < m$.
This implies that index $m$ was deleted at some previous time $Q$.
The result follows by considering the state of the algorithm
at this time.
Algorithm~\ref{alg:kmedian:approx:SW} maintains the conditions required
by Lemma~\ref{bounding} to ensure that
for any suffix $C$,
$\opt([X_i,Q]\cup C) \le (2+\beta \gamma) \opt([m,Q]\cup C)$.
Letting $C = [Q+1, N]$ yields the result.
Details are given in the Appendix.
\end{proof}

In what follows, let $\opt' = \opt(W,k) / \rho(W)$ and
$n = |W|$, the size of the window.

\begin{lemma} \label{spaceBound}
Algorithm \ref{alg:kmedian:approx:SW} maintains
$O(k \log n \log \opt')$ buckets.
\end{lemma}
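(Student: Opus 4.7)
The plan is to bound $T$, the number of indices $X_1,\ldots,X_T$ surviving each call to {\sc Update}; every retained bucket is indexed by a pair of surviving $X_i$'s, and {\sc Update} deletes all buckets with an unmarked endpoint, so it suffices to bound the total number of marks produced inside the two nested loops of {\sc Update} (lines~\ref{dropOPT}--\ref{endDropOPT}). I would bound the outer and inner iteration counts separately and multiply.

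\textbf{Outer loop ($O(\log \opt')$ iterations).} At line~\ref{assignJ}, $j$ is chosen maximal with $\beta R(X_i,N) \le \gamma R(X_j,N)$, and the next iteration begins at $j+1$. Since $R(X_{\cdot},N)$ is monotonically non-increasing as its first argument grows (the substream shrinks), maximality of $j$ forces $R(X_{j+1},N) < (\beta/\gamma) R(X_i,N)$, so the starting value of $R$ drops by the constant factor $\gamma/\beta > 1$ across successive outer iterations. Because $R(X_1,N)$ is an $\alpha$-approximation of $\opt(W,k) = \opt' \cdot \rho(W)$ and $R(X_T,N) \ge \rho(W)$, this geometric sequence has length $O(\log \opt')$.

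\textbf{Inner loop ($O(k \log n)$ iterations per outer iteration).} Let $\ell_0 = i, \ell_1,\ldots,\ell_M = j$ be the positions visited by the inner loop. At line~\ref{getMax}, $\ell_{m+1}$ is maximal satisfying $|t^{-1}(c) \cap \mathcal{S}_{\ell_m,\ell_{m+1}}| \le |t^{-1}(c) \cap \mathcal{S}_{\ell_{m+1},T}|$ for every $c \in C$, so by maximality this inequality fails at $\ell_{m+1}+1$ for some binding cluster $c^*_m \in C$. Combining the failure at $\ell_{m+1}+1$ with the identity $A+B$ = (total cardinality of $c^*_m$ in $\mathcal{S}_{\ell_m,T}$) pins down a geometric relationship between the active cardinality of $c^*_m$ across consecutive transitions. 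I would define the potential
\[
\Phi_m = \sum_{c \in C} \left\lceil \log_2\!\bigl(1 + |t^{-1}(c) \cap \mathcal{S}_{\ell_m, T}|\bigr) \right\rceil,
\]
which is bounded in $[0, k(1 + \log_2 n)]$, and argue that charging each transition's decrement to its binding cluster $c^*_m$ makes $\Phi$ drop by $\Omega(1)$ per step, yielding $M = O(k \log n)$.

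\textbf{Combining and the main obstacle.} Multiplying the two bounds gives $T = O(k \log n \log \opt')$ marks per {\sc Update} invocation, which is the claimed bucket count. The principal technical obstacle is formalising the potential drop in the inner loop: maximality yields a strict inequality at position $\ell_{m+1}+1$, while the potential is evaluated at $\ell_{m+1}$, so one must transfer the halving through a single index shift — for instance, by redefining the potential using cardinalities at $\ell_m+1$ rather than $\ell_m$, or by bounding the mass of $c^*_m$ residing at the single index $X_{\ell_{m+1}}$ relative to its remaining count — and then verifying that this charging is consistent across steps.
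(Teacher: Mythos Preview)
Your proposal is correct and follows the same strategy as the paper: bound the outer loop by the $\gamma/\beta$ geometric drop in $R(X_i,N)$ (giving $O(\log \opt')$ iterations) and the inner loop by the halving of some cluster's remaining size (giving $O(k\log n)$ iterations per outer pass), then multiply. The paper's proof is a terse two-sentence version of exactly this argument and does not address the off-by-one you flag as the principal obstacle; your potential-function formulation, together with the observation that $\ell_{m+2}\ge \ell_{m+1}+1$ so the halving from $\ell_m$ to $\ell_{m+1}+1$ telescopes over consecutive pairs of inner steps, is a clean way to make that step rigorous.
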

\begin{proof}
Each iteration of the loop on Line~\ref{dropOPT}
decreases $R(X_i)$ by a factor of $\gamma / \beta$,
so this loop is executed $O(\log_{\gamma / \beta} \opt')$ times.
In each iteration of the loop on Line~\ref{dropsize},
the size of at least one of the $k$ clusters decreases by half.
Each set has size at most $n$, so
this loop is executed $O(k \log_2 n)$ times.
Each execution of each loop stores one bucket,
so in total $O(k \log n \log \opt')$ buckets are stored
in these nested loops.
\end{proof}

\begin{lemma} \label{updateTime}
Assuming $\opt' = \poly(n)$, Algorithm~\ref{alg:kmedian:approx:SW}
requires $O( \poly(k, \log n) )$ update time.
\end{lemma}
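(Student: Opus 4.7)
The plan is to account separately for the three sources of work performed upon the arrival of a new point $p_N$: (i) extending the $T$ running \textPLS instances and snapshotting them into the buckets $B(X_i,N)$, (ii) initializing the new instance $\mathcal{A}(N)$, and (iii) executing the \update{} subroutine.

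First I would invoke Lemma~\ref{spaceBound} together with the hypothesis $\opt'=\poly(n)$ to bound $T=O(k\log n\log\opt')=O(k\log^2 n)$. Each \textPLS instance from \cite{COP03} processes a newly-arrived point in $\poly(k,\log n)$ time and maintains a weighted set of size $O(k\log^2 n)$, so extending all $T$ instances by one point and taking the snapshots $B(X_i,N)\gets\store(\mathcal{A}(X_i))$ costs $T\cdot\poly(k,\log n)=\poly(k,\log n)$ time in total. Starting the new instance $\mathcal{A}(N)$ is $O(1)$.

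Next I would analyze the \update{} subroutine. The outer \textbf{while} loop on Line~\ref{dropOPT} advances $i$ at each iteration, and by the same potential argument used in Lemma~\ref{spaceBound} the number of iterations is at most the number of retained indices, i.e.\ $O(k\log^2 n)$. Inside each iteration the dominant work items are: locating the maximal $j'$ on Line~\ref{assignJ} (a scan through at most $T$ cost estimates $R(X_{j'},N)$, which costs $\poly(k,\log n)$); running the offline approximation $\mathcal{M}$ on $S(X_i,X_j)$, which has size $O(k\log^2 n)$ so that $\mathcal{M}$ terminates in $\poly(k,\log n)$ time; and, on Line~\ref{getMax}, locating the maximal $\ell$ satisfying the cluster-cardinality predicate across the $O(k)$ centers of $C$ and the $O(T)$ candidate buckets, which is again $\poly(k,\log n)$. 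Cleanup (terminating unmarked instances, deleting their buckets, relabelling) touches only $O(T^2)=\poly(k,\log n)$ objects. Summing over the $O(k\log^2 n)$ iterations of the outer loop yields total $\poly(k,\log n)$ update time for \update{}.

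Combining the three contributions gives the claimed $O(\poly(k,\log n))$ per-arrival bound. The main obstacle I anticipate is justifying the cost of Line~\ref{getMax}: the cluster-cardinality test requires access to the exact values $|t^{-1}(c)\cap\mathcal{S}_{i,\ell}|$ and $|t^{-1}(c)\cap\mathcal{S}_{\ell,T}|$ for every candidate $\ell$, which is exactly what Lemma~\ref{maintainSizes} guarantees can be computed from the stored \textPLS snapshots. One must verify that the snapshots kept in the $O(T^2)$ buckets supply the needed counts without additional streaming passes, so that each test is a constant-time lookup once $\mathcal{M}$ has been run; given that, the polylogarithmic update bound follows.
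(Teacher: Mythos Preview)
Your proposal is correct and follows essentially the same approach as the paper: bound $T=O(k\log^2 n)$ via Lemma~\ref{spaceBound} and the hypothesis $\opt'=\poly(n)$, then separately account for feeding $p_N$ to the $T$ \textPLS instances, the calls to the offline approximation $\mathcal{M}$ inside the outer loop, and the cardinality tests on Line~\ref{getMax}. The paper's accounting is slightly finer---it bounds the number of outer-loop iterations (and hence calls to $\mathcal{M}$) by $O(\log\opt')=O(\log n)$ rather than by $T$, and tracks the inner loop separately to get the explicit $O(k^3\log^4 n)$ figure---but for the stated $\poly(k,\log n)$ conclusion your coarser bound is perfectly adequate.
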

\begin{proof}
The runtime of Algorithm~\ref{alg:kmedian:approx:SW} is dominated by the
$O(T k^2 \log^2 n)$ time required to partition the buckets
and that $T = O(k \log^2 n)$ by Lemma~\ref{spaceBound}.
A more detailed proof is given in the Appendix.
\end{proof}

\begin{theorem} \label{mainTheorem}
Assuming $\opt' = \poly(n)$,
there exists an $O(1)$-approximation for the metric
$k$-median problem on sliding windows using $O(k^3 \log^6 n)$ space and
$O( \poly(k, \log n))$ update time.
\end{theorem}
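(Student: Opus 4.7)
The plan is to bolt together the three preceding lemmas: Lemma~\ref{approxCost} supplies the approximation ratio, Lemma~\ref{spaceBound} bounds the number of buckets, and Lemma~\ref{updateTime} bounds the update time. Correctness, space, and update time then follow from three short calculations.

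For correctness, fix the current window $\mathcal{W} = [N-W+1,N]$ and set $m = N-W+1$ in Lemma~\ref{approxCost}. The first line of {\sc Update} maintains the invariant that $X_1 \le N-W+1$ (or $X_1 = 1$), so $s = 1$ is a legitimate choice in Lemma~\ref{approxCost}. The output bucket $B(X_1,N)$ stores centers $C = \mathcal{M}(S(X_1,N))$ produced by running the offline $O(1)$-approximation $\mathcal{M}$ on the \textPLS sketch for $[X_1,N]$; the discussion in the proof of Lemma~\ref{blackbox} shows $\cost([X_1,N],C) \le \alpha \cdot \opt([X_1,N],k)$ for an absolute constant $\alpha$. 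Since $\mathcal{W} \subseteq [X_1,N]$, monotonicity of $\cost(\cdot,C)$ gives $\cost(\mathcal{W},C) \le \cost([X_1,N],C)$, and Lemma~\ref{approxCost} then yields $\cost(\mathcal{W},C) \le \alpha(2+\beta\gamma)\,\opt(\mathcal{W},k)$, an $O(1)$-approximation since $\alpha, \beta, \gamma$ are all constants.

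For the space bound, I combine Lemma~\ref{spaceBound} with the per-bucket storage cost. Because $\opt' = \poly(n)$ implies $\log \opt' = O(\log n)$, Lemma~\ref{spaceBound} gives $T = O(k \log^2 n)$. For each pair $1 \le i < j \le T$ the bucket $B(X_i,X_j)$ stores a snapshot of a \textPLS instance, which by \cite{COP03} occupies $O(k \log^2 n)$ space. Summing over the $O(T^2)$ pairs yields $O(T^2 \cdot k \log^2 n) = O(k^3 \log^6 n)$. The update time bound of $\poly(k,\log n)$ is immediate from Lemma~\ref{updateTime}.

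The step I expect to require the most care is verifying the implicit sandwich invariant on which the correctness argument rests: namely, that after every call to {\sc Update}, $X_1$ either equals $1$ or satisfies $X_1 \le N-W+1 < X_2$. This requires unpacking the first branch of {\sc Update} together with the mark/unmark and deletion phases to confirm that no active index is ever dropped and that no pair of consecutive indices straddling the window boundary is simultaneously deleted. Once this invariant is in hand the remainder of the argument is bookkeeping, and the theorem follows.
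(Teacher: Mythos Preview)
Your proof is correct and follows essentially the same structure as the paper's: chain the monotonicity of $\cost(\cdot,C)$, the approximation guarantee of \textPLS plus the offline algorithm, and Lemma~\ref{approxCost} to obtain $\cost(\mathcal{W},C) \le O(1)\cdot \opt(\mathcal{W},k)$, then multiply the $O(T^2)$ buckets by the $O(k\log^2 n)$ per-bucket \textPLS footprint with $T = O(k\log^2 n)$ from Lemma~\ref{spaceBound}. Your explicit flag on the sandwich invariant $X_1 \le N-W+1 < X_2$ is good hygiene---the paper relies on it implicitly (via the first line of {\sc Update}) but does not isolate it as a lemma.
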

\begin{proof}
Using Algorithm~\ref{alg:kmedian:approx:SW},
we output a $\beta$-approximation for $[X_1, N]$,
which includes the current window $\mathcal{W}$.
By Lemma~\ref{approxCost},
$\opt([X_1,N],k) \le (2+\beta \gamma) \opt(\mathcal{W},k)$, and thus
$R(X_1,N) \le \beta (1 + \lambda +\beta \gamma \lambda) \opt(\mathcal{W},k)$.
Let $C$ be the approximate centers for $[X_1,N]$.
We have the following inequalities:
\begin{align*}
	\cost([X_1,N],C) &\le \beta \opt([X_1,N]) \\
	\opt([X_1,N])    &\le (1 + \lambda +\beta \gamma \lambda)
			\opt(\mathcal{W},k) \\
	\cost(W,C)   &\le \cost([X_1,N],C),
\end{align*}
where the last equation follows from the fact that $[X_1,N]$
contains the current window.
Connecting these inequalities, we have
$\cost(\mathcal{W},C)
\le \beta(1 + \lambda +\beta \gamma \lambda) \opt(\mathcal{W},k)$, as desired.

For the space bound,
note that for each $1 \le i < j \le T$,
bucket $B(X_i,X_j)$ contains the output of an instance of $\textPLS$.
Each of these $O(T^2)$ instances requires $O(k \log^2 n)$ space,
and $T = O(k \log n \log \opt')$ by Lemma~\ref{spaceBound},
so our assumption that $\opt' = \poly(n)$ implies that we use
$O(k^3 \log^6 n)$ space in total.
\end{proof}

%----------------------------------------------------------------------------------------------------------------------------
\section{Euclidean Coresets on Sliding Windows}
\label{sec:euclid}
In this section we first explain a coreset technique that unifies many of the known
coreset techniques. 
Then we explain the merge-and-reduce method.
Finally we develop our sliding window algorithm 
for coresets. 

\paragraph{A Unified Coreset Technique Algorithm.}
\label{sec:unify:coreset}
Many coreset technique algorithms for Euclidean spaces partition
the point set into small \emph{regions} (sometimes called \emph{cells}) 
and take a small number of points from each region of the partition either
randomly or deterministically. 
For each region, each of the chosen points is assigned a weight,
which is often the number of points in that region 
divided by the number of chosen points from that region. 
Some of the well-known coreset techniques that are in
this class are 
(1) the coreset technique  due to Har-Peled and Mazumdar
for the $k$-median and the $k$-means in low dimensional spaces~\cite{HPM04};
(2) the coreset technique due to Chen for the $k$-median
and the $k$-means problems in high-dimensional spaces~\cite{Ch09};
(3) the coreset technique due to
Feldman, Fiat and Sharir for the $j$-subspace problem in low dimensional
spaces~\cite{FFS06};
(4) the coreset technique due to Feldman, Monemizadeh,
Sohler and Woodruff for the $j$-subspace problem in high-dimensional
spaces~\cite{FMSW10}.
We unify this class of coreset techniques in
Algorithm \ref{alg:unify:coreset}.
In the sequel, we will use this unified view to
develop a sliding window streaming algorithm for this
class of coreset techniques. We will give the proofs for $k$-median and the $k$-means in low-and high-dimensional spaces 
and we defer the proofs for the $j$-subspace problem in low-and high-dimensional
spaces to the full version of this paper.
%-------------------------------------------------------------------------------------------------------------------------------------------------------------------

\begin{algorithm*}%[h]
\textbf{Input:} A set $P$ of $n$ points, a constant $c$ and two parameters
	$0< \epsilon,\delta\le 1$. 

\textbf{Algorithm:}
\begin{algorithmic}[1]
    \STATE Suppose we have a $(k,\epsilon)$-coreset technique $\mathcal{CC}$ that returns a partition $\Lambda_{\mathcal{K}}$ of $\REAL^d$.      
    \STATE Let $d_{VC}$ be the VC-D. of range space $(P,\mathcal{R})$ s.t. $\mathcal{R}$ is the set of shapes in $\REAL^d$ similar to regions in $\Lambda_{\mathcal{K}}$.
    \STATE Suppose  $\mathcal{CC}$ samples a set of size $s_{\mathcal{CC}}=f(n,d,\epsilon,\delta)$ from $R$ where $s_{\mathcal{CC}}$ is a function of $n,d,\epsilon,\delta$. 
    \STATE For each region $R\in \Lambda_{\mathcal{K}}$ we treat a weighted point $p$ of weight $w_p$ as $w_p$ points at coordinates of $p$.
    \STATE Sample $r=\min\big( |R|, \max\big(s_{\mathcal{CC}},O(d_{VC}\epsilon^{-2}\log(n)\cdot \log(\frac{d_{VC}\log(n)}{\epsilon\delta}))\big)\big)$ points uniformly at random.
    \STATE Each such a sampled point receives a weight of $n_R/r$ where $n_R$ is the number of points in region $R$.\
    \STATE Let $\mathcal{K}$ be the union of all (weighted) sampled points that are taken from regions in partition $\Lambda_{\mathcal{K}}$. 
\end{algorithmic}
  
\textbf{Output:} A coreset $\mathcal{K}$ of $P$ and its partition $\Lambda_{\mathcal{K}}$.

\caption{ Algorithm $\mathcal{A}$: A unified coreset technique}
\label{alg:unify:coreset}
\end{algorithm*}

%-------------------------------------------------------------------------------------------------------------------------------------------------------------------

%-------------------------------------------------------------------------------------------------------------------------------------------------------------------
\begin{lemma}
\label{lem:coreset:eps:slack} 
Let $P$ be a point set of size $n$ in a $d$-dimensional Euclidean space
$\REAL^d$ and $0< \epsilon \le 1$ be a parameter. 
Suppose we invoke one of the $(k,\epsilon)$-coreset techniques
of~\cite{HPM04} or ~\cite{Ch09}
and let $\mathcal{K}$ be the reported coreset and
$\Lambda_{\mathcal{K}}$ be the corresponding partition of $P$.
Suppose that for every region $R\in \Lambda_{\mathcal{K}}$ containing
$n_R$ points from $P$, we delete or insert up to $\epsilon n_R$ points.
Let $\mathcal{K}'$ be the coreset reported by Algorithm \ref{alg:unify:coreset}
after these deletions or insertions. 
Then, $\mathcal{K}'$ is a $(k,\epsilon)$-coreset of $\mathcal{K}$. 
\end{lemma}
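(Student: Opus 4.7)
The plan is to go through the intermediate quantities $\cost(P,C)$ and $\cost(P',C)$ for an arbitrary set $C$ of $k$ centers and apply the triangle inequality. Because $\mathcal{K}$ is a $(k,\epsilon)$-coreset of $P$ by the assumption of the lemma, and $\mathcal{K}'$ (produced by Algorithm~\ref{alg:unify:coreset} with sample size at least $s_{\mathcal{CC}}$ per region) is a $(k,\epsilon)$-coreset of $P'$, the two outer terms
\[
|\cost(\mathcal{K},C)-\cost(P,C)| \le \epsilon\,\cost(P,C), \qquad |\cost(P',C)-\cost(\mathcal{K}',C)| \le \epsilon\,\cost(P',C)
\]
are immediate. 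The heart of the proof is therefore the middle term $|\cost(P,C)-\cost(P',C)|$, which I would bound by $O(\epsilon)\,\cost(P,C)$ uniformly in $C$.

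First I would decompose per region:
\[
\cost(P,C)-\cost(P',C)=\sum_{R\in\Lambda_{\mathcal{K}}}[\cost(P_R,C)-\cost(P'_R,C)],
\]
and bound each summand by $\epsilon n_R$ (the number of modified points in $R$) times the maximum variation of $d(\cdot,C)$ over $R$. This is where the geometry of the underlying coreset technique matters. For HPM04~\cite{HPM04}, each region $R$ is an axis-parallel cell whose diameter is $O(\epsilon/n)\,\opt(P,k)$, so the per-region error is at most $O(\epsilon^2/n)\,n_R\,\opt(P,k)$ and the sum telescopes to $O(\epsilon^2)\,\opt(P,k)$; since $\cost(P,C)\ge\opt(P,k)$ this is $O(\epsilon)\,\cost(P,C)$. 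For Chen's construction~\cite{Ch09}, each region is a Euclidean ball whose radius is controlled by the estimated cluster distance, and an analogous per-region bound goes through using Chen's cost-per-cell lemmas.

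Second, I would use the enlarged sample size in Algorithm~\ref{alg:unify:coreset}, namely $r=\Omega(d_{VC}\epsilon^{-2}\log n\cdot\log(d_{VC}\log(n)/(\epsilon\delta)))$, to verify that $\mathcal{K}'$ is indeed a $(k,\epsilon)$-coreset of $P'$ (and not merely of some restricted class of queries). By the $\epsilon$-sample theorem applied to the range space of VC-dimension $d_{VC}$ whose ranges are shapes similar to regions of $\Lambda_{\mathcal{K}}$, a uniform sample of this size from each region is simultaneously an $\epsilon$-sample for all such ranges, with probability at least $1-\delta$ after a union bound over the polynomially many regions. This justifies treating $\mathcal{K}'$ as a $(k,\epsilon)$-coreset of $P'$ even though its own partition $\Lambda_{\mathcal{K}'}$ may differ from $\Lambda_{\mathcal{K}}$.

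Finally, chaining the three pieces gives $|\cost(\mathcal{K},C)-\cost(\mathcal{K}',C)|\le O(\epsilon)\,\cost(P,C)$, and replacing $\cost(P,C)$ by $\cost(\mathcal{K},C)/(1-\epsilon)\le(1+2\epsilon)\cost(\mathcal{K},C)$ yields the stated $\epsilon\,\cost(\mathcal{K},C)$ bound after rescaling $\epsilon$ by a fixed constant. The main obstacle I expect is the per-region variation step: arguing that shifting up to $\epsilon n_R$ points inside cell $R$ moves $\sum d(\cdot,C)$ by only a small fraction \emph{uniformly over every $C$} requires connecting the cell's intrinsic diameter (an artifact of the offline coreset's construction) to the external distance geometry determined by $C$; for Chen's sample-based construction in particular, this link is more delicate than for HPM's grid-based construction and is where the union-bound and VC-dimension ingredient does real work.
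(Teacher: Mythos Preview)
Your overall chaining through $\cost(P,C)$ and $\cost(P',C)$ is exactly what the paper does (its Lemmas in Sections~\ref{sec:HM:coreset} and~\ref{sec:chen:coreset} are precisely the ``middle term'' you isolate), but your execution of that middle term for the Har-Peled--Mazumdar construction has a genuine gap.

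You assert that each HPM cell has diameter $O(\epsilon/n)\,\opt(P,k)$. This is false: the cells in $\text{Ball}_{i,j}$ have side length $\frac{\epsilon}{10\sqrt{d}\alpha}\cdot 2^{j}\cdot\frac{\cost(P,C)}{n}$, so their diameter scales with $2^{j}$ and can be as large as $\Theta(\epsilon)\,\cost(P,C)$ at the top level $j=\log n$. Consequently your per-cell bound $\epsilon n_R\cdot\text{diam}(R)=O(\epsilon^{2}/n)\,n_R\,\opt$ does not hold uniformly, and the sum does not collapse to $O(\epsilon^{2})\,\opt$. (There is also a secondary issue: for a pure deletion of $\epsilon n_R$ points the cost change is $\sum_{\text{deleted }p}d(p,\mathcal{B})$, which is bounded by $\epsilon n_R\cdot\max_{p\in R}d(p,\mathcal{B})$, not by $\epsilon n_R$ times the diameter of $R$; the ``variation'' heuristic only works if points are replaced rather than removed.)

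The paper repairs this with a two-case split for each cell $c$ of side $\ell_c$ according to whether the nearest center $b_c\in\mathcal{B}$ satisfies $\dist(c,b_c)\ge \ell_c/\epsilon$ or not. In the far case the deleted cost is $\le\epsilon\,\cost(c,\mathcal{B})$ directly. In the close case with $j>0$, the crucial observation you are missing is that $\ell_c\le\frac{\epsilon}{5\sqrt{d}\alpha}\,\dist(c,c_i)$, so $\dist(c,b_c)<\ell_c/\epsilon$ lets one charge the deleted cost against $\dist(c,c_i)$ and hence against $\cost(P,C)\le\alpha\,\cost(P,\mathcal{B})$; summing these charges over all cells gives the global $\epsilon\,\cost(P,\mathcal{B})$ bound. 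The identical dichotomy (with rings instead of grid cells) handles Chen's construction. Your proposal would go through once you replace the uniform-diameter claim by this far/close case analysis.
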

\begin{proof}
Proofs for~\cite{HPM04} and ~\cite{Ch09} are in
Sections~\ref{sec:HM:coreset} and ~\ref{sec:chen:coreset},
respectively.
\end{proof}

\paragraph{Merge and Reduce Operation.}
The merge and reduce method inspired by a complete binary tree is a generic 
method in computational geometry to implement non-streaming algorithms 
in the insertion-only streaming model. 
Let $P$ be a set of $n$ points
\footnote{ Here we assume $n$ is known in advance. 
The case where $n$ is not known in advance can be accommodated using repeated
guesses for $n$. See, for example~\cite{HPM04}.},
presented in a streaming fashion.
The pseudocode of merge and reduce operation is given below. 
In this pseudocode, we use buckets
$\{B_1,B'_1, B_2,B'_2, \cdots, B_i, B'_i,\cdots,
B_{\log (n)-1},B'_{\log (n)-1},B_{\log(n)}\} ,$ 
where buckets $B_i$ and $B'_i$ can be considered as buckets in level $i$ of the merge-and-reduce tree 
and are of size $x_i$, which will be determined for each concrete problem. 
All buckets $B_i,B'_i$ for $i \in [\log n]$ are initialized to zero in the beginning of the stream.
%-------------------------------------------------------------------------------------------------------------------------------------------------------------------

\begin{algorithm*}%[h]
\textbf{Input:} A stream $S=[p_r,p_{r+1},p_{i+2},\cdots,p_{N-1}]$ of length $|S|=n^{c}$ for a constant $c$ and a point $p_N$.

\noindent
\textbf{Update Process, upon the arrival of new point $p_N$:}
\begin{algorithmic}[1]
    \STATE Let $i=1$ and add $p_N$ to $B_i$ if $B_i$ is not full, otherwise to $B'_i$.
    \WHILE{$B_i$ and $B'_i$ are both full and $i\le \log(n)$}
          \STATE Compute coreset $Z$ and partition $\Lambda_Z$ using Algorithm $\mathcal{A}(B_i\cup B'_i,\epsilon_i=\epsilon/(2\log n),\delta/n^c)$. 
          \STATE Delete the points of buckets $B_i$ and $B'_i$. 
          \STATE Let $B_{i+1}$ be $Z$ if $B_{i+1}$ is empty, otherwise let $B'_{i+1}$ be $Z$. 
          \STATE Let $i=i+1$. 
       \ENDWHILE
\end{algorithmic}

\textbf{Output:}  Return coreset $S_X=\cup_{i=1}^{\log(n)} B_i \cup B'_1$ and partition $\Lambda_{S_X}=\cup_{i=1}^{\log(n)} \Lambda_{B_i} \cup \Lambda_{B'_1}$.
\caption{{\sc MergeReduce} Operation}
\label{alg:merge:reduce}
\end{algorithm*}

%-------------------------------------------------------------------------------------------------------------------------------------------------------------------
The next lemma shows that the $(k,\epsilon)$-coreset
maintained by Algorithm {\sc MergeReduce} 
well-approximates the density of subsets of point set $P$ within every
region of partition $\Lambda_{B_i}$ for $i\in[\log n]$. 
The proof of this lemma is given in
Appendix~\ref{app:proof:bucket:sum:small:error}. 

\begin{lemma}
\label{lem:bucket:sum:small:error}
Let $B_i$ be the bucket at level $i$ of Algorithm {\sc MergeReduce} with $(k,\epsilon)$-coreset $B_i$ and partition $\Lambda_{B_i}$. 
Suppose the original points in the subtree $B_i$ is subset $P_i \subseteq P$.  
For every region $R_i\in \Lambda_{B_i}$,  
\[
   \begin{split}
     \big| |P_i \cap R_i| - |B_i \cap R_i| \big|
      \le \sum_{\text{level } j=2}^{i-1} \sum_{\text{node } x_j \text{ in level } j } \epsilon_{j} \big(\sum_{R\in \Lambda_{x_j}}  (|O_{x_j} \cap R|)  \big)  
      \enspace ,
    \end{split}
\]
where $O_{x_j}$ is a multi-set of points at node $x_j$ in level $i$ of the merge-and-reduce tree
such that for every point $p\in O_{x_j}$ with weight $w_p$,
we add $w_p$ copies of $p$ to $O_{x_j}$.  
\end{lemma}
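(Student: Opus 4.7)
The plan is to proceed by induction on the level $i$, but first I would strengthen the claim to bound the error for \emph{any} query range $R$ in the VC class $\mathcal{R}$ used by Algorithm~\ref{alg:unify:coreset}, not only for $R \in \Lambda_{B_i}$. Concretely, I would aim to show
\[
  \bigl| |P_i \cap R| - |B_i \cap R| \bigr|
  \;\le\; \sum_{j=2}^{i} \sum_{x_j \text{ descendant of } B_i}
  \epsilon_j \sum_{R' \in \Lambda_{x_j}} |O_{x_j} \cap R'|
\]
for every such $R$. The lemma is then the special case $R = R_i \in \Lambda_{B_i}$, where the $j=i$ term will vanish; this collapse is the crux of the argument.

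The base case $i=1$ is a leaf bucket of raw input points, so $P_1 = B_1$ and the error is zero. For the inductive step, suppose $B_i$ is produced by merging two children $B_{i-1}^{(\ell)}, B_{i-1}^{(r)}$ and invoking Algorithm~\ref{alg:unify:coreset} with accuracy $\epsilon_i$; let $O_i$ denote the expanded unit-weight multiset built from $B_{i-1}^{(\ell)} \cup B_{i-1}^{(r)}$. The key structural observation I would establish is that, because a sample drawn from region $R_i \in \Lambda_{B_i}$ is reweighted by $|O_i \cap R_i|/r$, the weighted count inside each single region of $\Lambda_{B_i}$ is preserved exactly: $|B_i \cap R_i| = |O_i \cap R_i|$. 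Consequently, when $R = R_i \in \Lambda_{B_i}$, the level-$i$ sampling contributes nothing to the error on $R$, which is why the outer sum in the lemma stops at $j = i-1$.

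For arbitrary $R$, I would bound $\bigl| |P_i \cap R| - |B_i \cap R| \bigr|$ via the triangle inequality: the sampling term $\bigl| |O_i \cap R| - |B_i \cap R| \bigr|$ is controlled by applying the $\epsilon$-sample guarantee (Appendix~\ref{appendix:euclid}) inside each $R_i \in \Lambda_{B_i}$ separately and summing, yielding $\epsilon_i \sum_{R_i \in \Lambda_{B_i}} |O_i \cap R_i|$; the inheritance term $\bigl| |P_i \cap R| - |O_i \cap R| \bigr|$ splits into the two children's errors and is handled by the inductive hypothesis applied on the same range $R$. The main obstacle is precisely in setting up the induction in its stronger form: the lemma as stated only concerns regions in $\Lambda_{B_i}$, yet the induction at the children must be invoked on a region $R_i$ drawn from $\Lambda_{B_i}$, which is in general \emph{not} a region in either child's partition. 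Working instead over the whole VC class $\mathcal{R}$ sidesteps this, provided Algorithm~\ref{alg:unify:coreset} samples enough points per region --- which it does, by virtue of the sample-size choice $r = \Theta\bigl(d_{VC}\epsilon_i^{-2}\log n\cdot\log(d_{VC}\log n/(\epsilon_i\delta))\bigr)$ together with a union bound over the $\poly(n)$ reduce operations performed across the tree.
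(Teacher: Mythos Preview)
Your proposal is correct and in fact takes a cleaner route than the paper. Both arguments rest on the same two pillars: (i) exact preservation of the weighted count inside each whole region of the current partition, $|B_i\cap R_i|=|O_i\cap R_i|$, and (ii) the $\epsilon$-sample guarantee for the per-region random sample. Where they diverge is in how the recursion is organized.

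The paper does not strengthen the hypothesis; it unrolls top-down. At level $i-1$ it queries the sample in region $R_{i-1}\in\Lambda_{B_{i-1}}$ with the range $R_i$; at level $i-2$ it must then query the sample in $R_{i-2}\in\Lambda_{B_{i-2}}$ with the \emph{intersection} $R_{i-1}\cap R_i$; and so on, so that at level $j$ the query is $R_{j+1}\cap\cdots\cap R_i$. This forces an appeal to Lemma~\ref{lem:mix:vc:dim} to bound the VC-dimension of these iterated intersections by $O((i-1)d_{VC}\log((i-1)d_{VC}))$, and is the reason the paper finishes by inflating $d_{VC}$ to $O(d_{VC}\log n)$ in Algorithm~\ref{alg:unify:coreset}. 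Your strengthened induction (prove the bound for every $R\in\mathcal{R}$, not just $R\in\Lambda_{B_i}$) keeps the query at each level equal to the \emph{single} top-level range $R$, so you only ever need the $\epsilon$-sample property against $\mathcal{R}$ itself with VC-dimension $d_{VC}$. That is a genuine simplification: it removes the dependence on Lemma~\ref{lem:mix:vc:dim} and the $\log n$ blow-up in the sampling parameter. Your observation that the $j=i$ summand disappears precisely when $R\in\Lambda_{B_i}$ is exactly what reconciles your strengthened bound with the lemma as stated. The only thing to make explicit when you write it out is that ``descendant of $B_i$'' in your sum is the subtree rooted at $B_i$, which is consistent with (and slightly sharper than) the paper's ``node $x_j$ in level $j$''.
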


Next we show the error of Lemma~\ref{lem:bucket:sum:small:error} is small.

\begin{lemma}
\label{lem:coreset:number:same:original}
Let $B_i$ be the bucket at level $i$ of Algorithm {\sc MergeReduce} with $(k,\epsilon)$-coreset $B_i$ and partition $\Lambda_{B_i}$. 
Suppose the original points in the subtree $B_i$ is subset $P_i \subseteq P$.  
For every $j\in[\log n]$, if we replace $\epsilon_j$ by $\epsilon/(2j)$,
the error
$ \sum_{\text{level } j=2}^{i-1} \sum_{\text{node } x_j \text{ in level } j }
  \epsilon_{j} \big(\sum_{R\in \Lambda_{x_j}}  (|O_{x_j} \cap R|)  \big)  $
of $\big| |P_i \cap R_i| - |B_i \cap R_i| \big|$  is $\epsilon$-fraction of the cost of $P_i$ 
in terms of $k$ arbitrary $j$-dimensional subspaces and so can be ignored.
\end{lemma}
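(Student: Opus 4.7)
The plan is to reorganize the double sum on the left-hand side by levels of the merge-and-reduce tree and then translate a count deviation in each region into a cost deviation via the coreset property at that level. First, I would observe that $O_{x_j}$ is defined so that every weighted point contributes as many copies as its weight, hence $\sum_{R \in \Lambda_{x_j}} |O_{x_j} \cap R| = |O_{x_j}|$ equals the number of original points in the subtree rooted at $x_j$. Since the nodes at a fixed level $j$ of the subtree rooted at $B_i$ partition the leaves of that subtree, summing $|O_{x_j}|$ over those nodes gives exactly $|P_i|$, the total original point count represented by $B_i$.

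Next, I would upgrade this count-wise identity to a cost bound by invoking the coreset guarantee at each internal node. At every node $x_j$ the partition $\Lambda_{x_j}$ is produced by the unified coreset construction of Algorithm~\ref{alg:unify:coreset}, applied with accuracy parameter $\epsilon_j$; by the structure of the underlying constructions of~\cite{HPM04} and~\cite{Ch09}, each region $R \in \Lambda_{x_j}$ is small enough (bounded diameter relative to the local cost scale) that all the points inside $R$ contribute within constant factors to the cost against any fixed set $C$ of $k$ centers / $j$-dimensional subspaces. Consequently, a count deviation of $\epsilon_j |O_{x_j} \cap R|$ inside $R$ translates, up to constants absorbed into $\epsilon_j$, into a cost deviation of $\epsilon_j \cdot \cost(O_{x_j} \cap R, C)$. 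Summing over $R \in \Lambda_{x_j}$ yields a per-node cost error of $\epsilon_j \cdot \cost(O_{x_j}, C)$, and summing over the nodes at level $j$, whose weighted point sets partition $P_i$, bounds the level-$j$ contribution by $\epsilon_j \cdot \cost(P_i, C)$.

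Finally, I would sum the level-wise contributions. The merge-and-reduce tree has depth at most $\log n$, and Algorithm~\ref{alg:merge:reduce} chooses $\epsilon_j = \epsilon/(2\log n)$, so the total error is bounded by
\[
\sum_{j=2}^{i-1} \epsilon_j \cdot \cost(P_i, C) \;\le\; \log n \cdot \frac{\epsilon}{2 \log n} \cdot \cost(P_i, C) \;\le\; \frac{\epsilon}{2}\, \cost(P_i, C),
\]
which is an $\epsilon$-fraction of the cost of $P_i$ against any $k$ query subspaces, hence negligible for the coreset guarantee. The main obstacle will be justifying the count-to-cost translation within a single region $R \in \Lambda_{x_j}$, because this step depends on the specific geometry of the underlying partition: for~\cite{HPM04} one uses the bounded diameter of the exponentially-spaced grid cells relative to a crude $k$-median estimate, for~\cite{Ch09} one uses the ring decomposition centered at a bicriteria approximation, and for the subspace variants~\cite{FFS06,FMSW10} one uses sensitivity-based arguments. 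I would therefore dispatch this step by a short case analysis referencing the region-diameter property of each construction rather than attempting a single uniform argument.
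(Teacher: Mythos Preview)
Your proposal is essentially correct and follows the same line as the paper's proof: interpret the per-node term $\epsilon_j \sum_{R\in\Lambda_{x_j}}|O_{x_j}\cap R|$ as an $\epsilon_j$-fraction perturbation of each region, convert this into at most an $\epsilon_j$-fraction of the clustering cost, and then sum over the $O(\log n)$ levels with $\epsilon_j=\epsilon/(2\log n)$ so the total is at most an $\epsilon$-fraction of $\cost(P_i,C)$. The ``main obstacle'' you identify---the count-to-cost translation within a region via a case analysis over the specific constructions of~\cite{HPM04} and~\cite{Ch09}---is precisely what the paper packages as Lemma~\ref{lem:coreset:eps:slack} (proved as Lemmas~\ref{lem:slack:k-median:low} and~\ref{lem:slack:k-median:high}), so rather than redoing the case analysis you should simply invoke that lemma. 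One small slip to tighten: when you pass from $\sum_{x_j}\cost(O_{x_j},C)$ to $\cost(P_i,C)$, note that the $O_{x_j}$ are coreset points rather than the original points, so equality holds only up to the $(1\pm\epsilon)$ coreset factors accumulated from levels below; this is harmless but should be stated.
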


\begin{proof} 
Let us look at the sub-terms in the error term
$ \sum_{\text{level } j=2}^{i-1}
  \sum_{\text{node } x_j \text{ in level } j }
  \epsilon_{j} \big(\sum_{R\in \Lambda_{x_j}}  (|O_{x_j} \cap R|)  \big)$.
For fixed node $x_j$ in level $j$,
$\epsilon_{j} \big(\sum_{R\in \Lambda_{x_j}}  (|O_{x_j} \cap R|)  \big)$
is the $\epsilon_j$-fraction change in region $R\in \Lambda_{x_j}$
of Algorithm  \ref{alg:unify:coreset}. 
Using Lemma~\ref{lem:coreset:eps:slack},
for each one of the coreset techniques in~\cite{HPM04} and ~\cite{Ch09}, 
the new $(k,\epsilon)$-coreset after these changes in every region is
again a $(k,2\epsilon)$-coreset of point set $P_i$. 
Here we use this fact that a $(k,\epsilon)$-coreset of a
$(k,\epsilon)$-coreset of $P$ is a $(k,2\epsilon)$-coreset of $P$. 
We have $i$ levels, each one of which is a $(k,\epsilon)$-coreset of $P_i$. 
Thus, the error term is $i$ times the error of one of one
$(k,\epsilon)$-coreset of $P_i$. 
If we replace $\epsilon_j$ by $\epsilon/(2j)$,
the overall error would be the same the error of one $(k,\epsilon)$-coreset
of $P_i$, which can be ignored. In Algorithm {\sc MergeReduce}
we replace $\epsilon_i=\frac{\epsilon}{2\log n}$ for all levels $i\in[\log n]$. 
\end{proof}
%-------------------------------------------------------------------------------------------------------------------------------------------------------------------

\paragraph{Coreset Maintenance in Sliding Windows.}
In this section we develop Algorithm {\sc SWCoreset},
a sliding window streaming algorithm for the class of coreset techniques 
(including the coreset techniques
of~\cite{HPM04},~\cite{Ch09},~\cite{FFS06}, and~\cite{FMSW10})
that fit into Algorithm \ref{alg:unify:coreset}. 
We show that the number of $(k,\epsilon)$-coresets that we maintain  
is upper bounded by the size of one $(k,\epsilon)$-coreset times $O(\epsilon^{-2}\log n)$.

\begin{algorithm*}%[h]
\textbf{Input:} A stream $S=[p_1,p_2,p_3,\dots,p_N,\dots,p_n]$ of points $\REAL^d$.

\textbf{Output:} A coreset $\mathcal{K}_{x_1}$ for window $W$, i.e., points $\{p_{N-W+1},\dots, p_N\}$.

\noindent
\textbf{Update Process, upon the arrival of new point $p_N$:}

\begin{algorithmic}[1]
	\FOR{$x_i\in X=[x_1,x_2,\dots,x_t]$  where $x_i\in \{1,\dots,N\}$}    \label{beginning}
                \STATE Let $(\mathcal{K}_{x_i},\Lambda_{\mathcal{K}_{x_i}})$={\sc MergeReduce}$([x_i,N]=\{p_{x_i},\dots, p_{N-1}\},p_N)$ be the coreset and its partition.
	\ENDFOR
         \STATE \label{increment} Let $t=t+1$,  $x_{t}=N$.
         \STATE Let $(\mathcal{K}_{x_t},\Lambda_{\mathcal{K}_{x_t}})$={\sc MergeReduce}$(\{\},p_N)$ be the coreset and the partition of single point $p_N$.
	\FOR{ $i=1$ to $t-2$}   \label{cleanup}
               \STATE Find greatest $j> i$  s.t. there is a region $R$ in partition $\Lambda_{\mathcal{K}_{x_j}}$ whose at most $\epsilon w_R$ weight is in $[x_i,x_j]$. \label{step:slack}
               \FOR{$i<r<j$} 
                     \STATE Delete $x_r$, coreset $\mathcal{K}_{x_r}$ and partition $\Lambda_{\mathcal{K}_{x_r}}$. Update the indices in sequence $X$ accordingly. \label{delete}
               \ENDFOR     
	\ENDFOR
	\STATE  Let $i$ be the smallest index such that $p_{x_i}$ is expired and $p_{x_{i+1}}$ is active.  \label{expired}
         \FOR {$r<i$}
               \STATE Delete $x_r$ and coreset $\mathcal{K}_{x_r}$ and partition $\Lambda_{\mathcal{K}_{x_r}}$, and update the indices in sequence $X$.
         \ENDFOR
\end{algorithmic}

\textbf{Output Process:}
\begin{algorithmic}[1]
    \STATE  Return coreset $\mathcal{K}_{x_1}$ maintained by {\sc MergeReduce}$([x_1,N]=\{p_{x_1},\dots, p_{N-1}\},p_N)$.
\end{algorithmic}

\caption{{\sc SWCoreset}}
\label{alg:coreset:SW}
\end{algorithm*}

\begin{theorem}
\label{lem:num:coreset:SW}
Let $P\subseteq \REAL^d$ be a point set of size $n$. 
Suppose the optimal cost of clustering of point set $P$ is $\opt_{P}=n^{O(c)}$ for some constant $c$.
Let $s$ be the size of a coreset (constructed using one of the coreset techniques~\cite{HPM04,Ch09})
that  merge-and-reduce method maintains for $P$ in the insertion-only streaming model. 
There exists a sliding window algorithm that maintains this coreset using $O(s^2\epsilon^{-2}\log n)$ space.
\end{theorem}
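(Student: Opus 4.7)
The plan is to split the proof into two pieces: a correctness argument that $\mathcal{K}_{x_1}$ is a valid $(k, O(\epsilon))$-coreset for the current window $W$, and a space bound that establishes $O(s^2 \epsilon^{-2} \log n)$ total memory. For correctness, I would first observe that $\mathcal{K}_{x_1}$ is, by the merge-and-reduce guarantee, a $(k,\epsilon)$-coreset of the interval $[x_1, N]$, and that $[x_1, N]$ contains the true window $W = [N-W+1, N]$ together with a possibly nonempty expired prefix $[x_1, x_2-1]$, where $p_{x_2}$ is the first active point. The key invariant I would extract from Step~\ref{step:slack} of Algorithm~\ref{alg:coreset:SW} is that for any two consecutive surviving indices $x_i, x_{i+1}$, some region of $\Lambda_{\mathcal{K}_{x_{i+1}}}$ contains at most an $\epsilon$-fraction of its weight in $[x_i, x_{i+1}]$. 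Strengthening this to a per-region statement via Lemmas~\ref{lem:bucket:sum:small:error} and~\ref{lem:coreset:number:same:original} (which guarantee that the region-wise weights estimated from the maintained coresets differ from the true stream weights by only an $\epsilon$-fraction), I would then apply Lemma~\ref{lem:coreset:eps:slack} to conclude that removing the expired prefix changes each region's weight by at most $\epsilon n_R$, so $\mathcal{K}_{x_1}$ is indeed a $(k, O(\epsilon))$-coreset for $W$.

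For the space bound, each index $x_i \in X$ maintains an instance of {\sc MergeReduce} of total size $s$, so the total memory is $|X| \cdot s$, and it suffices to show $|X| = O(s \epsilon^{-2} \log n)$. The plan for bounding $|X|$ is a potential argument: between any two consecutive kept indices $x_i, x_{i+1}$, the cleanup step certifies a region in $\Lambda_{\mathcal{K}_{x_{i+1}}}$ that loses a $(1-\epsilon)$-fraction of its weight when passing from $[x_{i+1}, N]$ to $[x_i, N]$. Because each partition has at most $s$ regions and the weight of any given region can undergo $O(\epsilon^{-1} \log n)$ multiplicative $(1-\epsilon)$-drops before vanishing, a straightforward charging scheme would already give $O(s \epsilon^{-1} \log n)$ indices; the extra $\epsilon^{-1}$ factor (and hence the $\epsilon^{-2}$ in the theorem) arises from the $\epsilon$-sample slack incurred when the "weight in $[x_i, x_j]$" is measured through the maintained coreset rather than the original stream, with the VC-sample bound from Algorithm~\ref{alg:unify:coreset} providing the precise quantitative tradeoff.

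The principal obstacle I anticipate is reconciling the fact that the partitions $\Lambda_{\mathcal{K}_{x_i}}$ at different indices are genuinely different sets of regions, so a potential function defined on a fixed family of regions does not apply directly. I expect the cleanest route is to pin the argument against the partition $\Lambda_{\mathcal{K}_{x_t}}$ of the most recent coreset and to exploit the "coreset of a coreset is a $(k, 2\epsilon)$-coreset" observation used in the proof of Lemma~\ref{lem:coreset:number:same:original}, so that a region in one partition can be mapped—up to controlled loss—to its counterparts in earlier partitions. This cross-partition bookkeeping, rather than any individual calculation, is the technically most delicate step of the proof.
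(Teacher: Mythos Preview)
Your space-bound argument departs from the paper's in a way that matters. You propose to track region weights directly: each of the at most $s$ regions can undergo $O(\epsilon^{-1}\log n)$ multiplicative $(1-\epsilon)$-drops, giving $O(s\epsilon^{-1}\log n)$ indices, with the missing $\epsilon^{-1}$ factor attributed vaguely to ``$\epsilon$-sample slack.'' That attribution is not a proof, and, as you yourself note, the regions belong to \emph{different} partitions $\Lambda_{\mathcal{K}_{x_i}}$ as $i$ varies, so there is no fixed family of $s$ regions on which to run a per-region potential argument. The cross-partition mapping you sketch in your final paragraph is exactly the delicate point, and you do not actually carry it out.

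The paper avoids this entirely by choosing a different potential: the \emph{optimal clustering cost} $\opt$ rather than region weights. The argument is that between consecutive surviving indices an $\epsilon$-fraction of some region changes; since any partition has at most $s$ regions, after at most $s/\epsilon$ consecutive indices \emph{every} region has dropped an $\epsilon$-fraction, and by Lemma~\ref{lem:coreset:eps:slack} this forces $\opt$ to drop by an $\epsilon$-fraction. The assumption $\opt_P = n^{O(c)}$ then bounds the number of such $\epsilon$-fraction drops in $\opt$ by $O(\log_{1+\epsilon} n) = O(\epsilon^{-1}\log n)$, yielding $|X| = O(s\epsilon^{-2}\log n)$ directly. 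Because $\opt$ is a single partition-independent scalar, the cross-partition bookkeeping you flag as the ``technically most delicate step'' simply never arises. This is also where the hypothesis $\opt_P = n^{O(c)}$ enters, which your proposal never uses.

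A smaller point: the paper's proof of this theorem addresses only the space bound; correctness of $\mathcal{K}_{x_1}$ as a coreset for the window is handled by the earlier lemmas rather than inside this proof.
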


\begin{proof}
According to Algorithm {\sc SWCoreset}, the next index that we keep in sequence $X$ occurs when $\epsilon$-fraction of 
a region $R \in \Lambda_{\mathcal{K}_{x_j}}$ changes. 
Since $s$ is the size of $(k,\epsilon)$-coreset that  Algorithm {\sc MergeReduce} maintains for $P$, 
the upper bound on the number of regions in partition $\Lambda_{\mathcal{K}_{x_j}}$ is also $s$. 
By Lemma~\ref{lem:coreset:eps:slack}, as long as at most $\epsilon$-fraction of the weight of a region in partition 
$\Lambda_{\mathcal{K}_{x_j}}$ drops, we still have a $(k,\epsilon)$-coreset. Thus, after at most $s/\epsilon$ indices, 
the optimal clustering cost drops by at least $\epsilon$-fraction of its cost. 
Therefore, after $O(\log_{1+\epsilon} n)=O(\epsilon^{-1}\log n)$ of this sequence of $\epsilon$-fraction drops in the optimal clustering cost, 
the cost converges to zero. 
Overall, the number of indices that we maintain is $O(s\epsilon^{-2}\log n)$. 
Moreover, for each index we maintain a $(k,\epsilon)$-coreset of size $s$ using Algorithm {\sc MergeReduce}; 
therefore, the space usage of our algorithm is $O(s^2\epsilon^{-2}\log n)$.
\end{proof}

%---------------------------------------------------------------------------------------------------------------------------------------------------

\newcommand{\Proc}{Proceedings of the~}
\newcommand{\STOC}{Annual ACM Symposium on Theory of Computing (STOC)}
\newcommand{\FOCS}{IEEE Symposium on Foundations of Computer Science (FOCS)}
\newcommand{\SODA}{Annual ACM-SIAM Symposium on Discrete Algorithms (SODA)}
\newcommand{\SOCG}{Annual Symposium on Computational Geometry (SoCG)}
\newcommand{\ICALP}{Annual International Colloquium on Automata, Languages and Programming (ICALP)}
\newcommand{\ESA}{Annual European Symposium on Algorithms (ESA)}
\newcommand{\CCC}{Annual IEEE Conference on Computational Complexity (CCC)}
\newcommand{\RANDOM}{International Workshop on Randomization and Approximation Techniques in Computer Science (RANDOM)}
\newcommand{\PODS}{ACM SIGMOD Symposium on Principles of Database Systems (PODS)}
\newcommand{\SSDBM}{ International Conference on Scientific and Statistical Database Management (SSDBM)}
\newcommand{\ALENEX}{Workshop on Algorithm Engineering and Experiments (ALENEX)}
\newcommand{\BEATCS}{Bulletin of the European Association for Theoretical Computer Science (BEATCS)}
\newcommand{\CCCG}{Canadian Conference on Computational Geometry (CCCG)}
\newcommand{\CIAC}{Italian Conference on Algorithms and Complexity (CIAC)}
\newcommand{\COCOON}{Annual International Computing Combinatorics Conference (COCOON)}
\newcommand{\COLT}{Annual Conference on Learning Theory (COLT)}
\newcommand{\COMPGEOM}{Annual ACM Symposium on Computational Geometry}
\newcommand{\DCGEOM}{Discrete \& Computational Geometry}
\newcommand{\DISC}{International Symposium on Distributed Computing (DISC)}
\newcommand{\ECCC}{Electronic Colloquium on Computational Complexity (ECCC)}
\newcommand{\FSTTCS}{Foundations of Software Technology and Theoretical Computer Science (FSTTCS)}
\newcommand{\ICCCN}{IEEE International Conference on Computer Communications and Networks (ICCCN)}
\newcommand{\ICDCS}{International Conference on Distributed Computing Systems (ICDCS)}
\newcommand{\VLDB}{ International Conference on Very Large Data Bases (VLDB)}
\newcommand{\IJCGA}{International Journal of Computational Geometry and Applications}
\newcommand{\INFOCOM}{IEEE INFOCOM}
\newcommand{\IPCO}{International Integer Programming and Combinatorial Optimization Conference (IPCO)}
\newcommand{\ISAAC}{International Symposium on Algorithms and Computation (ISAAC)}
\newcommand{\ISTCS}{Israel Symposium on Theory of Computing and Systems (ISTCS)}
\newcommand{\JACM}{Journal of the ACM}
\newcommand{\LNCS}{Lecture Notes in Computer Science}
\newcommand{\RSA}{Random Structures and Algorithms}
\newcommand{\SPAA}{Annual ACM Symposium on Parallel Algorithms and Architectures (SPAA)}
\newcommand{\STACS}{Annual Symposium on Theoretical Aspects of Computer Science (STACS)}
\newcommand{\SWAT}{Scandinavian Workshop on Algorithm Theory (SWAT)}
\newcommand{\TALG}{ACM Transactions on Algorithms}
\newcommand{\UAI}{Conference on Uncertainty in Artificial Intelligence (UAI)}
\newcommand{\WADS}{Workshop on Algorithms and Data Structures (WADS)}
\newcommand{\SICOMP}{SIAM Journal on Computing}
\newcommand{\JCSS}{Journal of Computer and System Sciences}
\newcommand{\JASIS}{Journal of the American society for information science}
\newcommand{\PMS}{ Philosophical Magazine Series}
\newcommand{\ML}{Machine Learning}
\newcommand{\DCG}{Discrete and Computational Geometry}
\newcommand{\TODS}{ACM Transactions on Database Systems (TODS)}
\newcommand{\PHREV}{Physical Review E}
\newcommand{\NATS}{National Academy of Sciences}
\newcommand{\MPHy}{Reviews of Modern Physics}
\newcommand{\NRG}{Nature Reviews : Genetics}
\newcommand{\BullAMS}{Bulletin (New Series) of the American Mathematical Society}
\newcommand{\AMSM}{The American Mathematical Monthly}
\newcommand{\JAM}{SIAM Journal on Applied Mathematics}
\newcommand{\JDM}{SIAM Journal of  Discrete Math}
\newcommand{\JASM}{Journal of the American Statistical Association}
\newcommand{\AMS}{Annals of Mathematical Statistics}
\newcommand{\JALG}{Journal of Algorithms}
\newcommand{\TIT}{IEEE Transactions on Information Theory}
\newcommand{\CM}{Contemporary Mathematics}
\newcommand{\JC}{Journal of Complexity}
\newcommand{\TSE}{IEEE Transactions on Software Engineering}
\newcommand{\TNDE}{IEEE Transactions on Knowledge and Data Engineering}
\newcommand{\JIC}{Journal Information and Computation}
\newcommand{\ToC}{Theory of Computing}
\newcommand{\MST}{Mathematical Systems Theory}
\newcommand{\Com}{Combinatorica}
\newcommand{\NC}{Neural Computation}
\newcommand{\TAP}{The Annals of Probability}
\newcommand{\TCS}{Theoretical Computer Science}

\bibliographystyle{plain}
\bibliography{4-References}
\newpage

%---------------------------------------------------------------------------------------------------------------------------------------------------
\appendix
%----------------------------------------------------------------------------------------------------------------------------

\section{Missing Proofs and Further Results from Metric $k$-Median Clustering in Sliding Windows}
\label{appendix:metric}

\subsection{$k$-median and $k$-means are not smooth functions}
\label{app:proof:notsmooth}

\begin{claim}
$k$-median and $k$-means clustering are not smooth functions.
That is, there exist sets of points $A, B$ and $C$
such that for any $\gamma, \beta > 0$,
$\opt(A \cup B,k) \le \gamma \opt(B,k)$
but $\opt(A \cup B \cup C) > \beta \opt(B \cup C)$.
\end{claim}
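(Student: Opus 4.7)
The plan is to exhibit an explicit counterexample, parametrized by $\beta$, in a metric space with only two locations. I will treat $k = 1$ and indicate the trivial extensions to general $k$ and to $k$-means at the end. The central observation is that in the metric $k$-median problem the center must be chosen from the point set, so with only two candidate locations the $1$-median optimum is decided by a simple head count of majority mass. Intuitively, adding a set $A$ of points colocated with the existing majority cluster of $B$ does not move the optimum center; but a much larger set $C$ placed at the minority location flips the majority, shifting the chosen center far from all of $A$'s points and so inflating their contribution.

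Fix two distinct points $a,b$ at distance $D$ in a metric space. Given $\beta > 0$, take an integer $n \ge 1$, set $\alpha = \lceil \beta n \rceil$ and $N = n + \alpha + 1$, and define the multisets
\[
B = \{n \text{ copies of } a\} \cup \{1 \text{ copy of } b\},\qquad
A = \{\alpha \text{ copies of } a\},\qquad
C = \{N \text{ copies of } b\}.
\]
I would then compare placing the center at $a$ versus at $b$ for each of $B$, $A \cup B$, $B \cup C$, $A \cup B \cup C$. A routine head-count yields $\opt(B,1) = \opt(A \cup B,1) = D$ (optimal center at $a$), together with $\opt(B \cup C,1) = nD$ and $\opt(A \cup B \cup C,1) = (n+\alpha)D$ (optimal center at $b$, using $N > n + \alpha$). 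The first pair gives $\opt(A \cup B,1) \le \gamma \cdot \opt(B,1)$ for every $\gamma \ge 1$ (and by monotonicity of $\opt$ the range $\gamma < 1$ is impossible in any nontrivial instance, so I take $\gamma \ge 1$ to be the implicit regime). The second pair gives $\opt(A \cup B \cup C,1) / \opt(B \cup C,1) = (n+\alpha)/n \ge 1 + \beta > \beta$, as required.

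To extend to arbitrary $k > 1$, I would adjoin to each of the four multisets the same $k-1$ auxiliary singletons $q_1,\dots,q_{k-1}$, placed so that all pairwise distances among them, and all distances from them to $\{a,b\}$, are far larger than $D$. In any optimal $k$-clustering each $q_i$ is then its own center contributing $0$, and the remaining center is the optimal $1$-median for the unchanged $\{a,b\}$ mass; the four optima are therefore identical to the $k = 1$ values above, and the ratios are preserved. For $k$-means, the identical construction works after replacing $d$ by $d^2$ in every cost, since the head-count argument is unaffected by the nonlinear transformation of distances.

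The only step that requires any care is verifying that no alternative choice of center can beat the one I pick, but this is immediate because only the two locations $a$ and $b$ appear in the point set and hence only two candidate centers exist. Consequently there is no real obstacle; the entire proof reduces to four elementary arithmetic comparisons plus an easy observation for $k > 1$.
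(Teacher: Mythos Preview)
Your construction is clean and the arithmetic is correct, but it does not prove the claim as stated because of a quantifier mismatch. The claim asserts the existence of a \emph{single} triple $(A,B,C)$ such that the two displayed inequalities hold \emph{for every} $\gamma,\beta>0$. That forces $\opt(A\cup B,k)=0$ (else the first inequality fails for small $\gamma$) and $\opt(B\cup C,k)=0$ with $\opt(A\cup B\cup C,k)>0$ (else the second fails for large $\beta$). Your example has $\opt(A\cup B,1)=\opt(B,1)=D>0$, so the first inequality already fails for every $\gamma<1$; and your sets depend on $\beta$, so you are proving $\forall\beta\,\exists(A,B,C)$ rather than $\exists(A,B,C)\,\forall\beta$.

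Your attempted fix, that ``$\gamma<1$ is impossible in any nontrivial instance by monotonicity of $\opt$'', is also incorrect: $\opt(\cdot,k)$ is \emph{not} monotone under inclusion when centers must lie in the point set. For instance, with $k=1$ let $B$ be the vertices of an equilateral triangle of side $1$ and $A$ its centroid; then $\opt(B,1)=2$ but $\opt(A\cup B,1)=\sqrt{3}<2$.

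The paper instead exploits the zero-cost regime directly: take $A\cup B$ to consist of exactly $k$ distinct points with $A\neq\emptyset$ containing a point not in $B$, so that $\opt(A\cup B,k)=\opt(B,k)=0$; then let $C$ be a single new point, so that $B\cup C$ still has at most $k$ distinct points (cost $0$) while $A\cup B\cup C$ has $k+1$ (cost $>0$). This one example works simultaneously for all $\gamma,\beta>0$. Your argument, while not establishing the claim as written, does suffice to show that $k$-median is not $(\epsilon,\epsilon')$-smooth for any $\epsilon,\epsilon'\in(0,1)$, since in that definition the relevant multiplicative constants $\tfrac{1}{1-\epsilon},\tfrac{1}{1-\epsilon'}$ exceed $1$ and the quantifier order there is $\forall\epsilon,\epsilon'\,\exists(A,B,C)$.
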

\begin{proof}
Let $A\cup B$ consist of $k$ distinct points, with $A \neq \emptyset$,
$B \neq \emptyset$ and $A$ contains one or more points not in $B$.
Then $B$ contains at most $k-1$ distinct points, and
$\opt(A \cup B,k) = 0 \le \gamma \opt(B,k) = 0$ for any $\gamma$.
Consider a set $C$ consisting of a single point
and satisfying $C \cap (A \cup B) = \emptyset$.
Then $A \cup B \cup C$ has at least $k+1$ distinct points,
so that $\opt(A \cup B \cup C,k) > 0$,
while $\opt(B \cup C,k)=0$.
Then for any $\beta$, we have that
$\opt(A \cup B \cup C,k) > 0 = \beta \opt(B \cup C,k)$.
\end{proof}

\subsection{Proof of Lemma~\ref{bounding}}
\label{app:proof:bounding}

\begin{proofof}{Lemma~\ref{bounding}}
Let $t$ be the $\beta$-approximate clustering map of $A\cup B$
in the hypothesis.
By assumption, $t$ induces partitions $A_1,A_2,\dots,A_k$ and
$B_1,B_2,\dots,B_k$ of $A$ and $B$, respectively, given by
$A_i = t^{-1}(c_i)\cap A$ and $B_i = t^{-1}(c_i)\cap B$.
Since $|A_i|\le |B_i|$ for all $i \in [k]$ by assumption,
for each $i \in [k]$ there exists a one-to-one mapping
$g_i$ from $A_i$ to a subset of $B_i$.
Letting $a\in A_i$, we have
\begin{equation*}
\tri^{-1} d(a,g_i(a)) \le d(a,t(a)) + d(g_i(a),t(a))
        = d(a,t(a)) + d(g_i(a),t(g_i(a))),
\end{equation*}
where the first inequality is the approximate triangle inequality
and the second inequality follows from the fact that
$t(g_i(a))= c_i = t(a)$ by definition of $g_i$. Thus,
\begin{align}
\tri^{-1} \sum_{i=1}^k\sum_{a\in A_i} d(a,g_i(a))
  &\le \sum_{i=1}^k \sum_{a\in A_i}
        \big[d(a,t(a)) + d(g_i(a),t(g_i(a)))\big] \nonumber \\
  &\le \sum_{x\in A \cup B} d(x,t(x))
   \le \beta \opt(A\cup B,k) \le \beta \gamma \opt(B,k) \label{eq:attachAs_1},
\end{align}
where the first inequality follows from the approximate triangle inequality,
the second inequality follows from the definition of $g_i$,
the third inequality follows since $t$ is $\beta$-approximate,
and the fourth inequality holds by assumption.
Let $s$ be an optimal clustering map for $B\cup C$
(i.e., $s$ is $1$-approximate).  Then we have
\begin{equation}\label{eq:attachAs_2}
\sum_{i=1}^k\sum_{a\in A_i} d(g_i(a),s(g_i(a)))
\le \sum_{i=1}^k\sum_{b\in B_i} d(b,s(b))
\le \opt(B \cup C,k).
\end{equation}
Bounding the cost of connecting $A$ to the optimal centers of $B\cup C$,
we obtain
\begin{align*}
\sum_{a\in A}  d(a,s(a))
  &= \sum_{i=1}^k \sum_{a\in A_i}  d(a,s(a))
        \le \sum_{i=1}^k \sum_{a\in A_i}  d(a,s(g_i(a))) \\
  &\le \sum_{i=1}^k \sum_{a\in A_i}
        \tri \big[ d(a,g_i(a)) + d(g_i(a),s(g_i(a))) \big] 
  \le \tri \opt(B \cup C,k) + \beta \gamma \tri \opt(B,k),
\end{align*}
where the first inequality follows from the fact that $s(a)$
is the closest center to $a$ by definition,
the second inequality follows from the approximate triangle inequality,
and the third inequality follows from equations~\eqref{eq:attachAs_1}
and~\eqref{eq:attachAs_2}.
Thus
%$\sum_{a\in A}  d(a,s(a)) \le \tri (1+\beta\gamma)\opt(B,k)$
we conclude that
\begin{align*}
\opt(A\cup B \cup C,k)
  &\le \sum_{a\in A} d(a,s(a)) \enspace + \sum_{x\in B\cup C} d(x,s(x)) \\
  &\le (1+\tri) \opt(B \cup C,k) + \beta \gamma \tri \opt(B,k)
  \le (1+\tri+\beta \gamma\tri)\opt(B\cup C,k).
\end{align*}
\end{proofof}

\subsection{Proof of Lemma~\ref{maintainSizes}}
\label{app:proof:maintainSizes}

\begin{proofof}{Lemma~\ref{maintainSizes}}
Using our algorithm from Lemma~\ref{blackbox},
we proceed as before until we are notified of the first point of $A_2$.
Here, we store $\mathcal{S}_{1,2} \gets \PLS(A_2)$,
but we continue running this instance of the \textPLS algorithm.
As before, we commence a new instance of the \textPLS algorithm
beginning with the first point of $A_2$.
In general, whenever a transition occurs to $A_j$,
for all $i < j$ we store
$\mathcal{S}_{i,j} \gets \PLS(A_i \cup \dots \cup A_{j-1})$
and continue running all instances.
As a result, we maintain sets $\mathcal{S}_{i,j}$ for $i \in [Z]$ and $j > i$.
There are $O(Z^2)$ such sets, each of size $O(k \log^2 n)$.
When we wish to compute the $\beta$-approximate map, we run an offline
$O(1)$-approximation on $\mathcal{S}_{j,\ell} \cup \mathcal{S}_{\ell , Z}$.
The cluster sizes are computed as in Lemma~\ref{blackbox}.
\end{proofof}

\subsection{Proof of Lemma~\ref{approxCost} }
\label{app:proof:approxCost}

\begin{proofof}{Lemma~\ref{approxCost} }
If $X_s = m$, we have equality, so suppose $X_s < m$,
implying that index $m$ was previously deleted
at some time $Q$, when $p_Q$ the most recent point to arrive.
Let $z$ be the index (assigned before deletion) such that $X_z = m$.
During some iteration of the loop on Lines~\ref{dropOPT}-\ref{endDropOPT},
it must have held after Line~\ref{assignJ} that $X_i \le X_s < X_z < X_j$.
This is because both $X_i$ and $X_j$ are stored,
so $s \ge i$ by maximality of $s$.

Line~\ref{assignJ} guarantees $\beta R([X_i,Q]) \le \gamma R([X_j,Q])$.
Since the $\beta$-approximation ensures that
$\opt(\cdot) \le R(\cdot) \le \beta \opt(\cdot)$,
this implies that $\opt([X_i,Q]) \le \gamma \opt([X_z,Q])$.
Let $t$ denote the (uncalculated, but existent)
$\beta$-approximate map of $[X_i,Q]$ as in Lemma~\ref{blackbox}.
The loop on Line~\ref{dropsize} ensures that
$|t^{-1}(c_w) \cap S_{i,z}| \le |t^{-1}(c_w) \cap S_{i,T}|$
for every $w \in [k]$.
Therefore, Lemma~\ref{bounding} guarantees that for any suffix $C$,
$\opt([X_i,Q]\cup C) \le (2+\beta \gamma) \opt([m,Q]\cup C)$.
By letting $C = [Q+1, N]$, the result is obtained.
\end{proofof}

\subsection{Proof of Lemma~\ref{updateTime}}
\label{app:proof:updateTime}
\begin{proofof}{Lemma~\ref{updateTime}}
Let $h(\cdot)$ be the update time for \textPLS
and $g(\cdot)$ be the time for the offline $c$-approximation.
Feeding new point $p_N$ to all $T$ instances of \textPLS
requires $T h(n)$ time
and computing the $c$-approximation for all $O(\log \opt) = O(\log n)$
iterations of the loop on Line~\ref{dropOPT} requires
$O(\log n \cdot g(k \log^2 n))$ time.
Partitioning each bucket requires $O(T k^2 \log^2 n)$ time,
and finding the maximal index on Line~\ref{getMax} requires $O(T^2 k)$ time.
In total, an update takes
$O(h(n) \cdot T + \log n \cdot g(k \log^2 n) + Tk^2\log^2 n + T^2 k)$ time.
By Lemma~\ref{spaceBound}, $T = O(k \log^2 n)$.
By~\cite{COP03}, the update time of \textPLS is polynomial in its argument,
and using any of a number of offline $O(1)$-approximations for
$k$-median, for example,~\cite{AGKMMP04,JV1999}.
Moreover, $g(\cdot)$ and $h(\cdot)$
are such that the last two terms are the largest factors,
resulting in an update time of $O(k^3 \log^4 n)$.
\end{proofof}

\subsection{Algorithm for Metric {\sc $k$-means}}
The metric $k$-means problem is defined similarly. 

\begin{definition}[Metric $k$-means]
Let $P \subseteq X$ be a set of $n$ points in a metric space $(X,d)$
and let $k\in \NATURAL$ be a natural number.
Suppose $C=\{c_1,\dots,c_k\}$ is a set of $k$ centers.
The clustering of point set $P$ using $C$ is the partitioning of $P$ such that
a point $p\in P$ is in cluster $C_i$ if $c_i\in C$ is the nearest center to $p$ in $C$,
that is point $p$ is assigned to its nearest center $c_i\in C$.
The cost of $k$-median clustering by $C$ is
$\cost^2(P,C)=\sum_{p\in P} \dist^2(p,C). $
The metric $k$-median problem is to find a set $C\subset P$ of $k$ centers that minimizes
the cost $\cost^2(P,C)$, that is
\[
   \begin{split}
        \cost^2(P,C)&=\sum_{p\in P} \dist^2(p,C)=\min_{C'\subset P: |C'|=k} \cost^2(P,C')\\
        &=\min_{C'\subset P: |C'|=k} \sum_{p\in P} \dist^2(p,C') \enspace ,
   \end{split}
\]
where $\dist^2(p,C)=\min_{c\in C} \dist^2(p,c)$ and $\dist^2(p,C')=\min_{c\in C'} \dist^2(p,c)$
\end{definition}

A concept used by our algorithm for the metric $k$-means is the notion of $\sigma$-separability \cite{BMOR11}. 
Intuitively, data which is separable for a high value of $\sigma$ is well-clusterable into $k$-clusters 
(i.e. removing one center greatly increases the optimal cost).

\begin{definition}[$\sigma$-separable dataset] \cite{BMOR11}
\label{separability}
A set of input data is said to be $\sigma$-separable if the ratio of the optimal $k$-means cost to the optimal $(k-1)$-means cost is at most $\sigma^2$.
\end{definition}

We now turn to the modifications necessary for {\sc $k$-means}.  We state the main theorem and explain the necessary changes in the remainder of this section.

\begin{theorem} \label{metricKMeans}
Assuming $\opt' = poly(n)$ and every window is $\sigma$-separable for some $\sigma = O(1)$, there exists a sliding windows algorithm which maintains a $O(1)$-approximation for the metric {\sc $k$-median} problem using $O(k^3 \log^6 n)$ space and $O(k^3 \log^4 n)$ update time.
\end{theorem}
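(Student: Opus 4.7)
The plan is to read Theorem~\ref{metricKMeans} as a specialization of Theorem~\ref{mainTheorem} to the metric case (i.e., $\lambda=1$), where the added $\sigma$-separability hypothesis is used to pin the smoothness constants of Lemma~\ref{bounding} to absolute values, and a more careful accounting of polynomial factors in the proof of Lemma~\ref{updateTime} extracts the explicit $O(k^3\log^4 n)$ update time in place of the looser $\poly(k,\log n)$. I would invoke Algorithm~\ref{alg:kmedian:approx:SW} verbatim.

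For the approximation ratio, I would replay the proof of Theorem~\ref{mainTheorem} with $\lambda = 1$. Lemma~\ref{bounding} gives a cost blow-up of $(1+\lambda+\beta\gamma\lambda)$ per pair of consecutive buckets, and $\sigma$-separability (interpreted in the natural way for $k$-median: the ratio of the optimal $k$-median cost to the optimal $(k-1)$-median cost of each window is at most $\sigma$) ensures that the value of $\gamma$ needed to make progress on Line~\ref{assignJ} of Algorithm~\ref{alg:kmedian:update} is an absolute $O(1)$ independent of $k$ and $n$. The offline $\beta$ is already $O(1)$ for any standard metric $k$-median approximation such as~\cite{AGKMMP04,JV1999}, so the final approximation ratio $\beta(1+\lambda+\beta\gamma\lambda) = O(1)$ follows for the centers extracted from bucket $B(X_1,N)$ via Lemma~\ref{approxCost}.

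For the space bound, Lemma~\ref{spaceBound} gives $T = O(k\log n\log \opt') = O(k\log^2 n)$ under $\opt' = \poly(n)$; each of the $O(T^2)$ stored buckets holds an $O(k\log^2 n)$-space \textPLSns{} snapshot, yielding $O(k^3\log^6 n)$. For update time, the decomposition in the proof of Lemma~\ref{updateTime} gives
\[
O\!\left(h(n)T + \log n\cdot g(k\log^2 n) + Tk^2\log^2 n + T^2 k\right),
\]
where $h$ is the per-point update cost of \textPLSns{} and $g$ the runtime of the offline approximation. Substituting $T=O(k\log^2 n)$ and choosing $g,h$ so that the last two algebraic terms dominate (as already used in Lemma~\ref{updateTime}), both $Tk^2\log^2 n$ and $T^2 k$ evaluate to $O(k^3\log^4 n)$, matching the advertised bound.

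The main obstacle I anticipate is translating $\sigma$-separability, phrased in Definition~\ref{separability} in terms of $k$-means costs, into a usable hypothesis for $k$-median: one has to verify that the bounded ratio of optimal $k$- to $(k-1)$-median costs is exactly the quantity Line~\ref{assignJ} relies on to keep $\gamma$ absolute. Once that translation is pinned down, the remainder is a careful re-reading of Section~\ref{SW-section} with the explicit polynomial factors tracked rather than absorbed into a $\poly(k,\log n)$ black box.
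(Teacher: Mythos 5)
There is a genuine gap here, rooted in a misreading of what the theorem is actually claiming. Despite the typo in its statement (``metric $k$-median''), Theorem~\ref{metricKMeans} is the main theorem of the subsection titled ``Algorithm for Metric $k$-means,'' and the paper's proof consists of the modifications needed to port Algorithm~\ref{alg:kmedian:approx:SW} to $k$-means: (i) the insertion-only subroutine \textPLS{} of~\cite{COP03}, which handles only $k$-median, is replaced by the insertion-only streaming $k$-means algorithm of~\cite{BMOR11}, which likewise outputs a weighted set $\mathcal{S}$ with $\cost(\mathcal{P},\mathcal{S}) \le \alpha\,\opt(\mathcal{P},k)$ in $O(k\log^2 n)$ space but whose approximation factor is $\alpha = O(\sigma^2)$ on $\sigma$-separable data; and (ii) Lemma~\ref{bounding} is invoked with $\lambda = 2$ rather than $\lambda = 1$, because squared distances obey only the $2$-approximate triangle inequality. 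With these two changes Theorem~\ref{mainTheorem} goes through unchanged and yields an $O(\sigma^4) = O(1)$ approximation.

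Your proposal assigns the $\sigma$-separability hypothesis an entirely different, and incorrect, role. You argue that separability ``ensures that the value of $\gamma$ needed to make progress on Line~\ref{assignJ} is an absolute $O(1)$.'' But $\gamma$ is a fixed design parameter of Algorithm~\ref{alg:kmedian:update} (it is the constant in condition~\ref{cond1:cost} of Lemma~\ref{bounding}); the ratio of optimal $k$- to $(k-1)$-clustering costs never enters Line~\ref{assignJ}, and indeed Theorem~\ref{mainTheorem} for $k$-median needs no separability assumption at all. The hypothesis appears solely because the only polylog-space insertion-only streaming subroutine available to play the role of \textPLS{} for $k$-means (namely~\cite{BMOR11}) requires it, and its approximation guarantee degrades with $\sigma$. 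Your space and update-time accounting is fine and matches Lemma~\ref{updateTime}, but the approximation-ratio argument as written does not establish the theorem: it keeps $\lambda = 1$ and keeps \textPLS{} as the subroutine, neither of which is applicable once the objective is a sum of squared distances.
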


In Lemma~\ref{blackbox}, for {\sc $k$-median} we had used the \textbf{PLS} algorithm to maintain a weighted set $\mathcal{S}$
such that $\cost(\mathcal{P},\mathcal{S}) \le \alpha \opt(\mathcal{P},k)$
for some constant $\alpha$.
Instead, we now use the insertion-only {\sc $k$-means} algorithm of \cite{BMOR11}.  This algorithm also works by providing a weighted set $\mathcal{S}$
such that $\cost(\mathcal{P},\mathcal{S}) \le \alpha \opt(\mathcal{P},k)$ 
for some constant $\alpha$.  Here, the space required is again $O(k \log^2 n)$.  The approximation-factor $\alpha$ is now $O(\sigma^2)$ where the data is $\sigma$-separable as defined in Definition~\ref{separability}.
The second modification is in Algorithm~\ref{alg:kmedian:approx:SW}.  For {\sc $k$-median}, we had used Lemma~\ref{bounding} with $\lambda = 1$ since the {\sc $k$-median} function satisified the $1$-approximate triangle inequality (i.e. the standard triangle inequality).  For {\sc $k$-means}, we now satisfy the $2$-approximate triangle inequality, so we use the lemma with $\lambda = 2$.  Theorem~\ref{mainTheorem} still holds without modification, so we result in a $O(\sigma^4)$-approximation.

%----------------------------------------------------------------------------------------------------------------------------

\section{Missing Proofs and Further Results from Euclidean Coresets in Sliding Windows}
\label{appendix:euclid}

Here, we briefly review the definition of VC-dimension and $\epsilon$-sample
as presented in Alon and Spencer's book \cite{as-pm}.
A \emph{range space} $S$ is a pair $(X,R)$,
where $X$ is a set and $R$ is a family of subsets of $X$.
The elements of $X$ are called \emph{points}
and the subsets in $R$ are called \emph{ranges}.
For $A \subseteq X$, we define the \emph{projection} of $R$ on $A$ as
$P_R(A)=\{r\cap A: r\in R\}$.
We say $A$ is \emph{shattered} if $P_R(A)$ contains all subsets of $A$.
The \emph{Vapnik-Chervonenkis} dimension (VC-D) of $S$,
which we denote by $d_{VC}(S)$, is the maximum cardinality
of a shattered subset of $X$.
If there exist arbitrarily large shattered subsets of $X$,
then $d_{VC}(S)=\infty$.
Let $(X,R)$ be a range space with VC-D $d$,
$A \subseteq X$ with $|A|$ finite, and let $0<\epsilon<1$ be a parameter.
We say that $B \subset A$ is an \emph{$\epsilon$-sample} for $A$
if for any range $r\in R$,
$|\frac{|A \cap r|}{|A|}-\frac{|B\cap r|}{|B|}|\le \epsilon$.

\begin{lemma}[\cite{as-pm}]
\label{lem:vc:dim}
Let $S=(X,R)$ be a range space of VC-D $d_{VC}(S)$,
$A \subseteq X$ with $|A|$ finite, and let $0<\epsilon<1$ be a parameter.
Let $c>1$ be a constant and $0<\delta<1$.
Then a random subset $B$ of $A$ of size
$s=\min(|A|,\frac{c}{\epsilon^2}\cdot (d_{VC}(S)\log(d_{VC}(S)/\epsilon)+\log(1/\delta)))$
is an $\epsilon$-sample for $A$ with probability at least $1-\delta$.
 \end{lemma}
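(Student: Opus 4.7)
The plan is to prove this by the classical Vapnik--Chervonenkis uniform convergence argument, which has three main ingredients: a symmetrization (double-sampling) trick that replaces the possibly infinite family $R$ by its trace on a finite set, the Sauer--Shelah lemma to bound the size of that trace, and a Chernoff/Hoeffding tail bound to control the deviation on a single range.

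First, I would reduce the statement to a double-sample comparison. Write $\nu_A(r)=|A\cap r|/|A|$ and $\nu_B(r)=|B\cap r|/|B|$. Draw a second independent sample $B'$ of the same size $s$ from $A$. By a standard ghost-sample argument, as long as $s$ is large enough that $\PPr{|\nu_{B'}(r)-\nu_A(r)|>\epsilon/2}\le 1/2$ for every fixed $r$ (which holds for $s=\Omega(1/\epsilon^2)$ by Chebyshev), one obtains
\[
\PPr{\sup_{r\in R}|\nu_B(r)-\nu_A(r)|>\epsilon}
\;\le\;
2\,\PPr{\sup_{r\in R}|\nu_B(r)-\nu_{B'}(r)|>\epsilon/2}.
\]
Thus it suffices to control the uniform deviation between two independent samples of size $s$ drawn from $A$, which depends on $R$ only through its traces on the combined sample $T=B\cup B'$ of size $2s$.

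Second, I would bound the number of distinct traces using Sauer--Shelah: for any fixed $T$ with $|T|=2s$,
\[
|\{r\cap T:r\in R\}|\;\le\;\sum_{i=0}^{d_{VC}(S)}\binom{2s}{i}\;\le\;\Bigl(\tfrac{2es}{d_{VC}(S)}\Bigr)^{d_{VC}(S)}.
\]
So after conditioning on $T$, a union bound need only be taken over this (polynomial in $s$) many effective ranges rather than over all of $R$.

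Third, I would apply a tail bound to each fixed trace. Conditioned on $T$, the split into $B$ and $B'$ is a uniformly random partition; by a Hoeffding-type bound for sampling without replacement, the probability that $|\nu_B(r)-\nu_{B'}(r)|>\epsilon/2$ for a fixed $r$ is at most $2\exp(-c'\epsilon^2 s)$ for a universal constant $c'>0$. Combining with the union bound above, the overall failure probability is at most $4\bigl(2es/d_{VC}(S)\bigr)^{d_{VC}(S)}\exp(-c'\epsilon^2 s)$, which is $\le\delta$ once
\[
s\;\ge\;\frac{c}{\epsilon^{2}}\bigl(d_{VC}(S)\log(d_{VC}(S)/\epsilon)+\log(1/\delta)\bigr)
\]
for a sufficiently large constant $c$, matching the statement; the trivial case $s=|A|$ handles the minimum. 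The main obstacle will be the symmetrization step: one must check that the ghost-sample inequality loses only a factor of $2$ in the deviation and a factor of $2$ in the probability, which requires the Chebyshev lower bound on $\PPr{|\nu_{B'}(r)-\nu_A(r)|\le \epsilon/2}$ and is precisely the place where the assumption $s=\Omega(1/\epsilon^2)$ becomes essential; the remaining Sauer--Shelah and Chernoff steps are routine.
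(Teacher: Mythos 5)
This lemma is imported verbatim from Alon and Spencer \cite{as-pm}; the paper supplies no proof of its own, and the proof in the cited source is exactly the classical argument you outline (ghost-sample symmetrization with a Chebyshev bound forcing $s=\Omega(1/\epsilon^2)$, Sauer--Shelah to bound the trace on the $2s$ combined points, and a Hoeffding bound plus union bound over the traces). Your reconstruction is correct and takes essentially the same route as the source the paper relies on, so there is nothing to flag.
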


 \begin{lemma}[\cite{H11} (Chapter 5)]
 \label{lem:mix:vc:dim}
 Let $S=(X,R)$ and $T=(X,R')$ be two range spaces of VC-D $d_{VC}(S)$ and $d'_{VC}(T)$,
 respectively, where $d_{VC}(S),d'_{VC}(T)>1$. 
 Let $U=\{r\cup r' | r \in R , r'\in R'\}$ and
  $I=\{r\cup r' | r \in R , r'\in R'\}$. 
 Then the range spaces $\hat{S}=(X,U)$ and $\hat{S}'=(X,I)$
 have $d_{VC}(\hat{S}) = d_{VC}(\hat{S}')=O((d_{VC}(S)+d'_{VC}(T))\log(d_{VC}(S)+d'_{VC}(T)))$.
 In general, unions, intersections and
any finite sequence of combining ranges spaces with finite
 VC-D results in a range space with a finite VC-D.
 \end{lemma}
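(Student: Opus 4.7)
The plan is to use the Sauer--Shelah lemma together with a double-counting argument on shattered sets. Recall Sauer--Shelah: for any range space $(X,\mathcal{R})$ of VC-dimension $d$ and any finite $A\subseteq X$, the projection satisfies $|P_{\mathcal{R}}(A)|\le\sum_{i=0}^{d}\binom{|A|}{i}=O(|A|^{d})$ whenever $|A|\ge d$. I will apply this to both $R$ and $R'$ to bound the number of distinct traces on any candidate shattered set.

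Fix a finite $A\subseteq X$ that is shattered by $U=\{r\cup r'\mid r\in R,\ r'\in R'\}$. Every subset of $A$ has the form
\[
(r\cup r')\cap A = (r\cap A)\cup(r'\cap A)
\]
for some $r\in R$, $r'\in R'$. By Sauer--Shelah applied separately to $(X,R)$ and $(X,R')$, the number of distinct sets $r\cap A$ is at most $C_1|A|^{d_{VC}(S)}$ and the number of distinct sets $r'\cap A$ is at most $C_2|A|^{d'_{VC}(T)}$, for absolute constants $C_1,C_2$. Hence the number of distinct sets expressible as $(r\cap A)\cup(r'\cap A)$ is at most $C_1C_2|A|^{d_{VC}(S)+d'_{VC}(T)}$. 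Since $A$ is shattered by $U$, we must have $2^{|A|}\le C_1C_2|A|^{d_{VC}(S)+d'_{VC}(T)}$. Writing $d=d_{VC}(S)+d'_{VC}(T)$ and taking logarithms yields $|A|\le d\log_2|A|+O(1)$, which by the standard solution of the inequality $n\le d\log_2 n$ gives $|A|=O(d\log d)$. This establishes the claimed bound for the union range space $\hat S=(X,U)$.

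For the intersection range space (which I read as the intended definition of $I$, despite the typographical repetition of $r\cup r'$ in the statement), the identical argument applies verbatim after replacing the union on the trace side by an intersection: $(r\cap r')\cap A=(r\cap A)\cap(r'\cap A)$, and the number of such intersections is again bounded by the product of the two trace counts. The counting and the resulting inequality $2^{|A|}\le C_1C_2|A|^{d}$ are unchanged, so the same $O(d\log d)$ bound follows.

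For the ``in general'' statement, I would proceed by induction on the number of operations used to build the combined range space. The base case is given by the two bounds just proved. For the inductive step, applying one more union or intersection with a family of VC-dimension $d'$ to a family whose VC-dimension is already bounded by some finite $D$ produces, by the same argument, a range space of VC-dimension $O((D+d')\log(D+d'))$, which is again finite. The only place where care is needed is to check that the transcendental step $n\le d\log_2 n+O(1)\implies n=O(d\log d)$ is applied correctly; this is the main (minor) technical obstacle, and it is handled by observing that if $n\ge 4d\log_2 d$ for $d\ge 2$, then $d\log_2 n<n$, contradicting the inequality, so $n<4d\log_2 d+O(1)$ as required.
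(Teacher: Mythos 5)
The paper offers no proof of this lemma; it is stated as an imported result, cited to \cite{H11} (Chapter~5). Your argument --- bounding the traces of $R$ and $R'$ on a candidate shattered set $A$ via Sauer--Shelah, multiplying to get $2^{|A|}\le C|A|^{d_{VC}(S)+d'_{VC}(T)}$, and solving $n\le d\log_2 n+O(1)$ --- is exactly the standard textbook proof of this fact, and it is correct, including your (correct) reading of the typo in the definition of $I$ as intersection rather than union. The only cosmetic omission is the trivial edge case $|A|<\max(d_{VC}(S),d'_{VC}(T))$, where Sauer--Shelah in the form $O(|A|^{d})$ does not apply but the claimed bound on $|A|$ holds vacuously.
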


In $\REAL^d$, the VC-D of a half space is $d$~\cite{H11}. 
Balls, ellipsoids and cubes in $\REAL^d$ have VC-D $O(d^2)$ as can be easily verified 
by lifting the points into $O(d^2)$ dimensions, where each one of these shapes in the original space is mapped into a half space.

Next, we first review $2$ well-known coreset techniques that fit into the framework of Section \ref{sec:euclid}. 
These coreset techniques are\\ 
(1) Coreset technique of  \cite{HPM04} due to Har-Peled and Mazumdar for the $k$-median and the $k$-means in low dimensional Euclidean spaces, 
i.e., when dimension $d$ is constant.\\
(2) Coreset technique of  \cite{Ch09} due to Chen for the $k$-median and the $k$-means problems in high dimensional Euclidean  spaces, 
i.e., when dimension $d$ is not constant.\\
%(3) $(j,k=1,\epsilon)$-Coreset technique of  \cite{FFS06} due to Feldman, Fiat, and Sharir for $j$-subspace problem in low dimensional Euclidean spaces, 
%i.e., when dimension $d$ is constant.\\
%(4) $(j,k=1,\epsilon)$-Coreset technique of  \cite{FMSW10} due to Feldman, Monemizadeh, Sohler and Woodruff for $j$-subspace problem in high dimensional Euclidean spaces, 
%i.e., when dimension $d$ is not constant. 

Next, we prove Lemma \ref{lem:coreset:eps:slack} for each one of these coreset techniques. 
Interestingly, the coreset technique of  \cite{HPM04} is the basis of almost all follow-up coresets 
for the $k$-median and the $k$-means in low dimensional Euclidean spaces. 
This includes the coreset techniques of \cite{FFS06,HPK05}.
The same is true for the coreset technique of  \cite{Ch09} which is the basis of almost all follow-up coresets 
for the $k$-median and the $k$-means in high dimensional Euclidean spaces. 
This includes the coreset techniques of \cite{FMS07,FMSW10,M10,FL11}. 

Finally we prove Lemma \ref{lem:bucket:sum:small:error}.
%

%----------------------------------------------------------------------------------------------------------------------------
%----------------------------------------------------------------------------------------------------------------------------
%----------------------------------------------------------------------------------------------------------------------------

\subsection{Coreset Technique of  \cite{HPM04} due to Har-Peled and Mazumdar}
\label{sec:HM:coreset}
We explain their coreset technique for the $k$-median problem. 
The same coreset technique works for the $k$-means problem. 
We first invoke a $\alpha$-approximation algorithm on a point set $P\in \REAL^d$ that returns  
a set $C=\{c_1,\cdots,c_k\}\subset \REAL^d$ of $k$ centers. Let $C_i$ be the set of points that 
are in the cluster of center $c_i$, i.e. $C_i=\{p\in P: \dist(p,c_i)\le \min_{c_j\in C}\dist(p,c_j)\}$. 
We consider $\log n+1$ balls $\text{Ball}_{i,j}(c_i,2^j\cdot \frac{\cost(P,C)}{n})$ centered at center $c_i$ 
of radii $2^j\cdot \frac{\cost(P,C)}{n}$ for $j\in \{0,1,2,\cdots,\log n\}$. 
In $\text{Ball}_{i,j}$, we impose a grid $G_{i,j}$ of side length $\frac{\epsilon}{10\sqrt{d}\alpha}\cdot 2^j\cdot \frac{\cost(P,C)}{n}$ 
and for every non-empty cell $c$ in grid $G_{i,j}$ we replace all points in $c$ with one 
point of weight $n_{c}$, where $n_{c}$ is the number of points in $c$. 
Let $\mathcal{K}_P$ be the set of cells returned by this coreset technique, 
that is  $\mathcal{K}_P=\cup_{c\in G_{i,j}} c$. 
We also let partition $\Lambda_P$ to be the set of cells $\Lambda_P=\{G_{i,j} \cap \text{Ball}_{i,j}: i \in [k], j\in [\log n+1]\}$.

%----------------------------------------------------------------------------------------------------------------------------

In Step 3 of Algorithm \ref{alg:unify:coreset} we let 
$s_{\mathcal{CC}}=f(n,d,\epsilon,\delta)=1$ (which is the number of points that this coreset technique 
samples from each cell) be  $1$. Moreover,  we let parameter $s$ in Lemma \ref{lem:num:coreset:SW} (which is 
the size of this coreset maintained by merge-and-reduce approach) be $O(k \epsilon^{-d} \log^{2d+2}{n})$.
We now prove a variant of Lemma \ref{lem:coreset:eps:slack} for this coreset. 
Observe that to adjust the parameter $\epsilon$ in Lemma \ref{lem:coreset:eps:slack} 
for Lemma \ref{lem:slack:k-median:low}, we replace $\epsilon$ by $\frac{\epsilon^2}{5\sqrt{d}}$. 

\begin{lemma}
\label{lem:slack:k-median:low}
Let $P\subset \REAL^d$ be a point set of size $n$ and $k\in\NATURAL$ be a parameter. 
Let $\Lambda_P=\{c_1,c_2,\cdots,c_x\}$ be the partition set of cells returned by the coreset technique 
of Har-Peled and Mazumdar \cite{HPM04}. 
Let $\mathcal{B}=\{b_1,\cdots,b_k\}\subset \REAL^d$ be an arbitrary set of $k$ centers. 
Suppose for every cell $c \in \Lambda_P$ we delete up to $\frac{\epsilon^2}{5\sqrt{d}}\cdot n_{c}$ points 
and let  $c'$ be cell $c$ after deletion of these points. We then have 
\[
\sum_{c\in \Lambda_P} |\cost(c',\mathcal{B}) - \cost(c,\mathcal{B})| \le 
\epsilon\cdot \cost(P,\mathcal{B}) \enspace ,
\]
where $\cost(c,\mathcal{B})=\sum_{p\in c} \dist(p,\mathcal{B})$ and $\cost(c',\mathcal{B})=\sum_{p\in c'} \dist(p,\mathcal{B})$. 
\end{lemma}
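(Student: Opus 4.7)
The plan is to exploit the defining geometric feature of the Har-Peled--Mazumdar construction: every cell has diameter that is tiny compared to the distance from its points to the nearest approximate center. A cell $c \in G_{i,j}$ lies inside the ball of radius $2^j \cost(P,C)/n$ around $c_i$ and has side length $\frac{\epsilon}{10\sqrt{d}\alpha} \cdot 2^j \cost(P,C)/n$, so its Euclidean diameter satisfies $\mathrm{diam}(c) \le \frac{\epsilon}{10\alpha} \cdot 2^j \cost(P,C)/n$. For any set $\mathcal{B}$ of $k$ centers and any two points $p, q$ in the same cell $c$, the triangle inequality gives $|\dist(p, \mathcal{B}) - \dist(q, \mathcal{B})| \le \mathrm{diam}(c)$, and averaging over $q \in c$ yields, for each $p \in c$,
$$\dist(p,\mathcal{B}) \;\le\; \tfrac{1}{n_c}\cost(c,\mathcal{B}) + \mathrm{diam}(c).$$

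Using this pointwise bound, I would write the per-cell error as a sum over the deleted points and plug in the assumption $|c \setminus c'| \le \frac{\epsilon^2}{5\sqrt{d}}\,n_c$:
$$\sum_{c \in \Lambda_P} |\cost(c',\mathcal{B}) - \cost(c,\mathcal{B})| \;=\; \sum_{c} \sum_{p \in c \setminus c'} \dist(p,\mathcal{B}) \;\le\; \tfrac{\epsilon^2}{5\sqrt{d}}\,\cost(P,\mathcal{B}) + \tfrac{\epsilon^2}{5\sqrt{d}}\sum_c n_c\,\mathrm{diam}(c).$$
The first term is already at most $\epsilon\,\cost(P,\mathcal{B})/2$ since $\epsilon^2/(5\sqrt{d}) \le \epsilon$ whenever $\epsilon \le 1$. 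For the second, I would reprove the standard Har-Peled--Mazumdar ``movement'' bound: stratify each point $p$ by the smallest index $j^\ast(p)$ such that $p$ lies in the $j^\ast(p)$-th ball around its approximate center $c_i$. For $j^\ast(p) \ge 1$, $\dist(p, c_i) \ge 2^{j^\ast(p)-1} \cost(P,C)/n$, so the diameter of the cell containing $p$ is at most $\frac{\epsilon}{5\alpha}\,\dist(p, c_i)$, while points with $j^\ast(p) = 0$ contribute at most $\frac{\epsilon}{10\alpha}\cost(P,C)$ in total. Summing produces $\sum_c n_c\,\mathrm{diam}(c) \le \frac{3\epsilon}{10\alpha}\cost(P,C)$, and then since $C$ is an $\alpha$-approximation and $\mathcal{B}$ is a feasible $k$-center set, $\cost(P,C) \le \alpha\,\opt(P,k) \le \alpha\,\cost(P,\mathcal{B})$, yielding $\sum_c n_c\,\mathrm{diam}(c) \le \frac{3\epsilon}{10}\cost(P,\mathcal{B})$.

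Substituting back, the total error is at most $O(\epsilon^2/\sqrt{d})\,\cost(P,\mathcal{B}) \le \epsilon\,\cost(P,\mathcal{B})$, as required. The main obstacle is re-establishing the movement bound with enough care to separate the innermost ball (where only the crude absolute bound $\mathrm{diam}(c) \le \frac{\epsilon}{10\alpha}\cost(P,C)/n$ is available, but there are at most $n$ points to pay for) from the outer shells (where each cell's diameter is charged against a constant fraction of the within-cluster cost $\sum_{p \in c} \dist(p, c_i)$). Once that bound is in hand, the remainder is bookkeeping on the two error terms, and the fact that the deletion fraction is $\epsilon^2/(5\sqrt{d})$ rather than $\epsilon$ is exactly what absorbs the $\sqrt{d}$ from the VC-sampling step used elsewhere in Algorithm~\ref{alg:unify:coreset}.
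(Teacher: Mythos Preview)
Your argument is correct and rests on the same geometric facts as the paper's proof: the cell-diameter bound $\mathrm{diam}(c)\le\frac{\epsilon}{10\alpha}\cdot 2^{j}\cost(P,C)/n$, the split between the innermost ball ($j=0$) and the outer shells ($j\ge 1$) where the diameter is charged against $\dist(p,c_i)$, and the final use of $\cost(P,C)\le\alpha\,\cost(P,\mathcal{B})$. The one organisational difference is that the paper fixes a cell $c$ and does a case analysis on whether the nearest center $b_c\in\mathcal{B}$ is far ($\dist(c,b_c)\ge \ell_c/\epsilon$) or near ($\dist(c,b_c)<\ell_c/\epsilon$); in the far case the deleted cost is bounded directly by $\epsilon\,\cost(c,\mathcal{B})$, and only in the near case does the diameter/movement argument enter. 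Your averaging inequality $\dist(p,\mathcal{B})\le \tfrac{1}{n_c}\cost(c,\mathcal{B})+\mathrm{diam}(c)$ collapses these two cases into one line, which is arguably cleaner; the paper's split makes the dependence on $\dist(c,b_c)$ more explicit but is otherwise equivalent. Either way the final bound is $O(\epsilon^2/\sqrt{d})\,\cost(P,\mathcal{B})\le\epsilon\,\cost(P,\mathcal{B})$.
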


\begin{proof}
Let us fix a cell $c\in \Lambda_P$. Suppose $c\in G_{i,j}$ for a particular cluster $P_i$ and $j\in \{0,1,2,\cdots,\log n\}$. 
Assume the nearest center to cell $c$ is $b_{c}\in \mathcal{B}$. 
We either have $\dist(c,b_{c})\ge \frac{\ell_{c}}{\epsilon}$ or $\dist(c,b_{c})< \frac{\ell_{c}}{\epsilon}$. 
If we have $\dist(c,b_{c})\ge \frac{\ell_{c}}{\epsilon}$, then 
$\cost(c,\mathcal{B})\ge n_{c}\cdot \dist(c,b_{c})\ge n_{c}\cdot \frac{\ell_{c}}{\epsilon}$. 
On the other, since we delete up to $\frac{\epsilon^2}{\sqrt{d}}\cdot n_{c}$ points of cell $c$, 
the cost that we lose is at most 
\[
  \frac{\epsilon^2}{\sqrt{d}}\cdot n_{c} (\sqrt{d}\ell_{c}+\dist(c,b_{c})) 
  \le \frac{\epsilon^2}{\sqrt{d}}\cdot n_{c}\cdot (\sqrt{d}\epsilon+1)\cdot \dist(c,b_{c}) 
  \le \epsilon\cdot \cost(c,\mathcal{B}) \enspace .
\]

Now suppose we have $\dist(c,b_{c})< \frac{\ell_{c}}{\epsilon}$. 
We have two cases, either $j=0$ or $j>0$. 
We first prove the lemma when $j=0$. 
Since cell $c$ is in $G_{i,0}$, we have $\ell_c=\frac{\epsilon}{10\sqrt{d}\alpha}\cdot \frac{\cost(P,C)}{n}$. 
Therefore, $\dist(c,b_{c})< \frac{\ell_{c}}{\epsilon}\le \frac{1}{10\sqrt{d}\alpha}\cdot \frac{\cost(P,C)}{n}$.
From every cell in $Ball_{i,0}\cap G_{i,0}$ we delete up to $\frac{\epsilon^2}{\sqrt{d}}\cdot n_{c}$ points. 
Therefore, the cost that we lose is at most 
\[
  \sum_{c\in Ball_{i,0}\cap G_{i,0}} \frac{\epsilon^2}{\sqrt{d}}\cdot n_{c} \cdot \frac{1}{10\sqrt{d}\alpha}\cdot \frac{\cost(P,C)}{n} 
  \le \frac{\epsilon^2}{10d\alpha}\cdot \cost(P,C) \le \frac{\epsilon^2}{10d}\cdot \cost(P,\mathcal{B})\enspace .
\]

The second case is when $j>0$. Recall that $\dist(c,b_{c})< \frac{\ell_{c}}{\epsilon}$. 
Observe that since $j>0$, $\dist(c,c_i)\ge 2^{j-1}\cdot \frac{\cost(P,C)}{n}$ and 
$\ell_c=\frac{\epsilon}{10\sqrt{d}\alpha}\cdot 2^j\cdot \frac{\cost(P,C)}{n}\le \frac{\epsilon}{5\sqrt{d}\alpha}\cdot \dist(c,c_i)$ 
which means $\dist(c,b_{c})< \frac{\ell_{c}}{\epsilon}\le \frac{1}{5\sqrt{d}\alpha}\cdot \dist(c,c_i)$. 
For each such cell $c$ we delete up to $\frac{\epsilon^2}{\sqrt{d}}\cdot n_{c}$ points. 
Thus, taking the summation over all $i$ and $j>0$ yields 
\[
  \begin{split}
  \sum_{i=1}^k \sum_{j>0} \sum_{ c\in G_{i,j}\cap Ball_{i,j}} \frac{\epsilon^2}{\sqrt{d}}\cdot n_{c}\cdot \dist(c,b_{c})
  &\le \sum_{i=1}^k \sum_{j>0} \sum_{ c\in G_{i,j}\cap Ball_{i,j}} \frac{\epsilon^2}{\sqrt{d}}\cdot n_{c}\cdot \frac{1}{5\sqrt{d}\alpha}\cdot \dist(c,c_i)\\
   &\le \frac{\epsilon^2}{5d\alpha}\cdot \cost(P,C) \le \frac{\epsilon^2}{5d}\cdot \cost(P,\mathcal{B})\enspace .
  \end{split}
\]
\end{proof}

%-------------------------------------------------------------------------------------------------------------------------------------------------------------------

\subsection{Coreset Technique of  \cite{Ch09} due to Chen}
\label{sec:chen:coreset}
We explain his coreset technique for the $k$-median problem. 
The same coreset technique works for the $k$-means problem. 
We first invoke a $\alpha$-approximation algorithm on a point set $P\in \REAL^d$ that returns  
a set $C=\{c_1,\cdots,c_k\}\subset \REAL^d$ of $k$ centers. Let $C_i$ be the set of points that 
are in the cluster of center $c_i$, i.e. $C_i=\{p\in P: \dist(p,c_i)\le \min_{c_j\in C}\dist(p,c_j)\}$. 
We consider $\log n+1$ balls $\text{Ball}_{i,j}(c_i,2^j\cdot \frac{\cost(P,C)}{n})$ centered at center $c_i$ 
of radii $2^j\cdot \frac{\cost(P,C)}{n}$ for $j\in \{0,1,2,\cdots,\log n\}$. 
In ring $\text{Ball}_{i,j}\backslash \text{Ball}_{i,j-1}$ having $n_{i,j}$ points, 
we take a sample set $S_{i,j}$ of size $s=\min(n_{i,j},\epsilon^{-2}dk\log(\frac{k\log n}{\epsilon\delta})$ points uniformly at random 
and we assign a weight of  $n_{i,j}/s$ to each sampled point. 
We replace all points in $\text{Ball}_{i,j}$ with weighted set $S_{i,j}$. 
Let $\mathcal{K}_P$ be the union set of weighted sampled sets returned by this coreset technique, 
that is  $\mathcal{K}_P=\cup_{i,j} S_{i,j}$. 
We also let partition $\Lambda_P$ to be the set of rings $\Lambda_P=\{\text{Ball}_{i,j}\backslash \text{Ball}_{i,j-1}: i \in [k], j\in [\log n+1]\}$.

%----------------------------------------------------------------------------------------------------------------------------

In Step 3 of Algorithm \ref{alg:unify:coreset} for each ring $\text{Ball}_{i,j}\backslash \text{Ball}_{i,j-1}$ having $n_{i,j}$ points we let 
$s_{\mathcal{CC}}=f(n,d,\epsilon,\delta)$  be  $\min(n_{i,j},\epsilon^{-2}dk\log(\frac{k\log n}{\epsilon\delta})$. 
Moreover,  we let parameter $s$ in Lemma \ref{lem:num:coreset:SW} (which is 
the size of this coreset maintained by merge-and-reduce approach) be $O(k^2d\epsilon^{-2}\log^8 n)$.
We now prove a variant of Lemma \ref{lem:coreset:eps:slack} for this coreset. 
Observe that to adjust the parameter $\epsilon$ in Lemma \ref{lem:coreset:eps:slack} 
for Lemma \ref{lem:slack:k-median:low}, we replace $\epsilon$ by $\frac{\epsilon^2}{5\sqrt{d}}$.

\begin{lemma}
\label{lem:slack:k-median:high}
Let $P\subset \REAL^d$ be a point set of size $n$ and $k\in\NATURAL$ be a parameter. 
Let $\Lambda_P=\{\text{Ball}_{i,j}: i \in [k], j\in [\log n+1]\}$ be the partition set of balls returned by the coreset technique 
of Chen \cite{Ch09}. 
Let $\mathcal{B}=\{b_1,\cdots,b_k\}\subset \REAL^d$ be an arbitrary set of $k$ centers. 
Suppose for every ring $\text{Ball}_{i,j}\backslash \text{Ball}_{i,j-1} \in \Lambda_P$ having $n_{i,j}$ points, we delete up to $\frac{\epsilon^2}{\sqrt{d}}\cdot n_{i,j}$ points 
and let  $(\text{Ball}_{i,j}\backslash \text{Ball}_{i,j-1})'$ be this ball after deletion of these points. We then have 
\[
\sum_{\text{Ball}_{i,j}\backslash \text{Ball}_{i,j-1}\in \Lambda_P} |\cost((\text{Ball}_{i,j}\backslash \text{Ball}_{i,j-1})',\mathcal{B}) - \cost(\text{Ball}_{i,j}\backslash \text{Ball}_{i,j-1},\mathcal{B})| \le 
\epsilon\cdot \cost(P,\mathcal{B}) \enspace ,
\]
where $\cost(\text{Ball}_{i,j}\backslash \text{Ball}_{i,j-1},\mathcal{B})=\sum_{p\in \text{Ball}_{i,j}\backslash \text{Ball}_{i,j-1}} \dist(p,\mathcal{B})$ and 
$\cost((\text{Ball}_{i,j}\backslash \text{Ball}_{i,j-1})',\mathcal{B})=\sum_{p\in (\text{Ball}_{i,j}\backslash \text{Ball}_{i,j-1})'} \dist(p,\mathcal{B})$. 
\end{lemma}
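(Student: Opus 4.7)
The plan is to imitate the structure of the proof of Lemma~\ref{lem:slack:k-median:low} for Har-Peled--Mazumdar, but where we use rings $R_{i,j}=\text{Ball}_{i,j}\setminus\text{Ball}_{i,j-1}$ in place of grid cells and the ring outer radius $r_{i,j}=2^j\cdot\frac{\cost(P,C)}{n}$ in place of the side length $\ell_c$. Fix a ring $R_{i,j}$ with $n_{i,j}$ points and let $b_{R_{i,j}}\in\mathcal{B}$ be the center of $\mathcal{B}$ closest to some fixed representative point of $R_{i,j}$; since $R_{i,j}$ has diameter at most $2r_{i,j}$, every $p\in R_{i,j}$ satisfies $|\,\dist(p,\mathcal{B})-\dist(p',\mathcal{B})\,|\le 2r_{i,j}$ for any other $p'\in R_{i,j}$.

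Next I would split into the same two regimes used in the low-dimensional proof. In the \emph{far regime} where $\dist(R_{i,j},b_{R_{i,j}})\ge r_{i,j}/\epsilon$, each of the at most $\frac{\epsilon^2}{\sqrt d}n_{i,j}$ deleted points contributes at most $\dist(R_{i,j},b_{R_{i,j}})+2r_{i,j}=O(\dist(R_{i,j},b_{R_{i,j}}))$ to the cost, while the ring itself already pays at least $n_{i,j}\cdot(\dist(R_{i,j},b_{R_{i,j}})-2r_{i,j})=\Omega(n_{i,j}\dist(R_{i,j},b_{R_{i,j}}))$, so the loss is at most $O(\epsilon^2/\sqrt d)\cdot\cost(R_{i,j},\mathcal{B})$, which sums to $\epsilon\cdot\cost(P,\mathcal{B})$ over all rings. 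In the \emph{near regime} where $\dist(R_{i,j},b_{R_{i,j}})<r_{i,j}/\epsilon$, each deleted point contributes cost at most $r_{i,j}/\epsilon+2r_{i,j}=O(r_{i,j}/\epsilon)$, so the loss over the ring is at most $O(\epsilon/\sqrt d)\cdot n_{i,j}\cdot r_{i,j}$.

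I would then charge the near-regime loss against $\cost(P,C)$ exactly as in Lemma~\ref{lem:slack:k-median:low}: when $j=0$, $r_{i,0}=\cost(P,C)/n$ and $\sum_i n_{i,0}\cdot r_{i,0}\le \cost(P,C)$, while for $j\ge 1$ any point of $R_{i,j}$ lies at distance $\ge 2^{j-1}\cdot\frac{\cost(P,C)}{n}=r_{i,j}/2$ from $c_i$, hence $\sum_{i,j\ge 1} n_{i,j}\cdot r_{i,j}\le 2\,\cost(P,C)$. Combined with $\cost(P,C)\le\alpha\,\opt(P,k)\le\alpha\,\cost(P,\mathcal{B})$ and the $\frac{\epsilon^2}{\sqrt d}$ deletion budget per ring, the total near-regime loss is at most $O(\epsilon^2/\sqrt d)\cdot\alpha\,\cost(P,\mathcal{B})$, which is dominated by $\epsilon\,\cost(P,\mathcal{B})$ after absorbing the constants in the parameter $\epsilon$.

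The main technical obstacle is the handling of the ring diameter: unlike grid cells, whose diameter is $\sqrt d\,\ell_c$ and whose nearest-center distance can be replaced by the distance to a corner, the rings in Chen's construction can wrap around $c_i$, so two points in the same ring can be as far as $2r_{i,j}$ apart rather than close to each other. This is why I anchor the argument at a single representative of the ring and pay an additive $2r_{i,j}$ in each bound; the cases are then chosen precisely so that this additive slack is absorbed either into $\cost(R_{i,j},\mathcal{B})$ (far regime) or into the $r_{i,j}/\epsilon$ factor (near regime). A second subtlety is that Chen's rings also contain a sample $S_{i,j}$ of size $s=\min(n_{i,j},\epsilon^{-2}dk\log(k\epsilon^{-1}\delta^{-1}\log n))$; for this lemma the sampling step is not used directly because the bound concerns deletion of original points per ring, but one should verify that the $\epsilon$ in the lemma's hypothesis is adjusted (as in the low-dimensional case, by a factor $\Theta(\sqrt d/\epsilon)$) so that the constants work out.
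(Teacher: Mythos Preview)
Your proposal is correct and follows essentially the same route as the paper: fix a ring $R_{i,j}$ with outer radius $\ell_{R_{i,j}}=2^j\cost(P,C)/n$, split into the far regime $\dist(R_{i,j},b_{R_{i,j}})\ge \ell_{R_{i,j}}/\epsilon$ (charge to $\cost(R_{i,j},\mathcal{B})$) and the near regime (split $j=0$ versus $j>0$ and charge to $\cost(P,C)\le\alpha\,\cost(P,\mathcal{B})$ via the inner radius $2^{j-1}\cost(P,C)/n$). The only cosmetic difference is that the paper's proof actually uses the deletion budget $\tfrac{\epsilon^2}{2}n_{i,j}$ rather than $\tfrac{\epsilon^2}{\sqrt d}n_{i,j}$, since the ring diameter is $2\ell_{R_{i,j}}$ with no $\sqrt d$ factor; your version with the smaller $\tfrac{\epsilon^2}{\sqrt d}$ budget is of course also fine.
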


\begin{proof}
The proof is in the same spirit of the proof of Lemma \ref{lem:slack:k-median:low}. 
Let us fix a ring $\text{Ball}_{i,j}\backslash \text{Ball}_{i,j-1} \in \Lambda_P$ for $i\in [k]$ and $j\in \{0,1,\cdots, \log n \}$. 
Observe that the radius of $\text{Ball}_{i,j}$ is $2^j\cdot \frac{\cost(P,C)}{n}$ and the radius of $\text{Ball}_{i,j-1}$ is $2^{j-1}\cdot \frac{\cost(P,C)}{n}$. 
For the simplicity let us denote $\text{Ball}_{i,j}\backslash \text{Ball}_{i,j-1}$ by $R_{i,j}$ 
and we let $\ell_{R_{i,j}}=2^j\cdot \frac{\cost(P,C)}{n}$ and let $n_{R_{i,j}}$ be the number of points in the ring $R_{i,j}$. 
%Suppose $c\in G_{i,j}$ for a particular cluster $P_i$ and $j\in \{0,1,2,\cdots,\log n\}$. 
Assume the nearest center to ring $R_{i,j}$ is $b_{R_{i,j}}\in \mathcal{B}$. 
We either have $\dist(R_{i,j},b_{R_{i,j}})\ge \frac{\ell_{R_{i,j}}}{\epsilon}$ or $\dist(R_{i,j},b_{R_{i,j}})< \frac{\ell_{R_{i,j}}}{\epsilon}$. 
If we have $\dist(R_{i,j},b_{R_{i,j}})\ge \frac{\ell_{R_{i,j}}}{\epsilon}$, then 
$\cost(R_{i,j},\mathcal{B})\ge n_{R_{i,j}}\cdot \dist(R_{i,j},b_{R_{i,j}})\ge n_{R_{i,j}}\cdot \frac{\ell_{R_{i,j}}}{\epsilon}$. 
On the other, since we delete up to $\frac{\epsilon^2}{2}\cdot n_{R_{i,j}}$ points of cell $R_{i,j}$, 
the cost that we lose is at most 
\[
  \frac{\epsilon^2}{2}\cdot n_{R_{i,j}} (2\ell_{R_{i,j}}+\dist(R_{i,j},b_{R_{i,j}})) 
  \le \frac{\epsilon^2}{2}\cdot n_{R_{i,j}}\cdot (2\epsilon+1)\cdot \dist(R_{i,j},b_{R_{i,j}}) 
  \le \epsilon\cdot \cost(R_{i,j},\mathcal{B}) \enspace .
\]

Now suppose we have $\dist(R_{i,j},b_{R_{i,j}})< \frac{\ell_{R_{i,j}}}{\epsilon}$. 
We have two cases, either $j=0$ or $j>0$. 
We first prove the lemma when $j=0$. 
For $j=0$, ring $R_{i,0}$ is in fact ball $\text{Ball}_{i,0}$ of radius $\ell_{R_{i,0}}=\frac{\cost(P,C)}{n}$.
Since cell $c$ is in $G_{i,0}$, we have $\ell_c=\frac{\epsilon}{10\sqrt{d}\alpha}\cdot \frac{\cost(P,C)}{n}$. 
Therefore, $\dist(R_{i,0},b_{R_{i,0}})< \frac{\ell_{R_{i,0}}}{\epsilon}\le \frac{1}{\epsilon}\cdot \frac{\cost(P,C)}{n}$.
We delete up to $\frac{\epsilon^2}{2}\cdot n_{R_{i,0}}$ points from $R_{i,0}$. 
Therefore, the cost that we lose is at most 
\[
  \frac{\epsilon^2}{2}\cdot n_{R_{i,0}} \cdot  \dist(R_{i,0},b_{R_{i,0}}) 
  \le \frac{\epsilon^2}{2}\cdot n_{R_{i,0}} \cdot \frac{1}{\epsilon}\cdot \frac{\cost(P,C)}{n} 
  \le \frac{\epsilon}{2}\cdot n_{R_{i,0}} \cdot \frac{\cost(P,C)}{n} \enspace .
\]

We have $i\in [k]$. Hence a summation over $i$ will find the overall cost that we lose as follows. 
\[
  \sum_{i\in [k]} \frac{\epsilon}{2}\cdot n_{R_{i,0}} \cdot \frac{\cost(P,C)}{n} 
  \le \frac{\epsilon}{2}\cdot \cost(P,C)\enspace .
\]

The second case is when $j>0$. Recall that $\dist(R_{i,j},b_{R_{i,j}})< \frac{\ell_{R_{i,j}}}{\epsilon}$. 
Observe that since $j>0$, $\dist(R_{i,j},c_i)\ge 2^{j-1}\cdot \frac{\cost(P,C)}{n}$ and 
$\ell_{R_{i,j}}= 2^j\cdot \frac{\cost(P,C)}{n}\le 2 \dist(R_{i,j},c_i)$ 
which means $\dist(R_{i,j},b_{R_{i,j}})< \frac{\ell_{R_{i,j}}}{\epsilon}\le \frac{2}{\epsilon}\cdot \dist(R_{i,j},c_i)$. 
For each such ring $R_{i,j}$ we delete up to $\frac{\epsilon^2}{2}\cdot n_{R_{i,j}}$ points. 
Thus, taking the summation over all $i$ and $j>0$ yields 
\[
  \begin{split}
  \sum_{i=1}^k \sum_{j>0} \sum_{ R_{i,j}\in \Lambda_P} \frac{\epsilon^2}{2}\cdot n_{R_{i,j}}\cdot \dist(R_{i,j},b_{R_{i,j}})
  &\le \sum_{i=1}^k \sum_{j>0} \sum_{R_{i,j}\in \Lambda_P} \frac{\epsilon^2}{2}\cdot n_{R_{i,j}}\cdot \frac{2}{\epsilon}\cdot \dist(R_{i,j},c_i)\\
   &\le \epsilon\cdot \cost(P,C) \enspace .
  \end{split}
\]

\end{proof}

%-------------------------------------------------------------------------------------------------------------------------------------------------------------------
\COMMENTED{
\subsection{Coreset Technique of  \cite{FMSW10} due to Feldman, Monemizadeh, Sohler and Woodruff}
\label{sec:FMSW:coreset}
The coreset technique of \cite{FMSW10} is for $k=1$ $j$-(dimensional) subspace problem when $q=1$, that is 
given a point set $P\in \REAL^d$ we want to find a $j$-dimensional subspace $F$ such that the sum of distances 
to $F$ is minimum. Here we assume $|P|=n$ and dimension $d$ is not constant. The idea is in fact an extension 
of the coreset technique of \cite{FMS07} due to Feldman, Monemizadeh and Sohler for the $k$-means problem. 
Indeed, the coreset technique of \cite{FMS07} is also an extension of coreset technique of  \cite{Ch09} due to Chen 
for the $k$-median and $k$-means problems in high dimensional Euclidean spaces. The basic idea is to do the random sampling 
non-uniformly to hit points which are far from approximate centers. More precisely, the idea behind this approach 
for the $j$-subspace problem when $q=1$ is as follows. 

Suppose we have a $j$-subspace $C$ which is an $\alpha$-approximation of the optimal $j$-subspace for 
point set $P\in \REAL^d$. We want to find a $(j,k=1,\epsilon)$-coreset for point set $P$ in terms of an arbitrary 
$j$-subspace $L$. Here we denote the Euclidean distance of a point $p$ to $j$-subspace $L$ by $\dist(p,L)$. 
We also denote the orthogonal projection of a point $p$ onto $j$-subspace $C$ by $p'=proj(p,C)$. 
Let us fix a point $p\in P$ and its orthogonal projection $p'=proj(p,C)$ onto $C$. 
From elementary Calculus We have that $\dist(p,L)=\dist(p,L)+\dist(p',L)-\dist(p',L)$. 
We can rearrange this equation and show it as $\dist(p,L)=\dist(p',L)+(\dist(p,L)-\dist(p',L))$. 
By the triangle inequality we have that  $|\dist(p,L)-\dist(p',L)|\le \dist(p,p')$. 
Thus, a coreset for point $p$ would be to keep point $p'$ and try to approximate the term $(\dist(p,L)-\dist(p',L))$ 
using the random sampling. In this way, point set $P$ is replaced by $P'=\{p'=proj(p,C): p\in P\}$ and a set of pairs 
$(p,-p')$ for each $p\in P$. For pairs $(p,-p')$ we do random sampling to approximate the . Here we can use a similar technique as Chen's 
coreset technique \cite{Ch09} 
}

%-------------------------------------------------------------------------------------------------------------------------------------------------------------------

%\subsection{Coreset Technique of  \cite{FFS06} due to Feldman, Fiat, and Sharir}
%\label{sec:FFS:coreset}

%-------------------------------------------------------------------------------------------------------------------------------------------------------------------

\subsection{Proof of Lemma \ref{lem:bucket:sum:small:error}}
\label{app:proof:bucket:sum:small:error}

\begin{proofof}{Lemma \ref{lem:bucket:sum:small:error}}
Recall that we take $(k,\epsilon)$-coreset $B_i$ from its children which are buckets $B_{i-1}$ and $B'_{i-1}$. 
Observe that $B_{i-1}$ and $B'_{i-1}$ are also $(k,\epsilon)$-coresets of subtrees rooted at nodes $B_{i-1}$ and $B'_{i-1}$
with partitions $\Lambda_{B_{i-1}}$ and $\Lambda_{B'_{i-1}}$, respectively.  
Similarly, we take $(k,\epsilon)$-coreset $B_{i-1}$ from its children which are buckets $B_{i-2}$ and $B'_{i-2}$ 
that  are, in turn,  $(k,\epsilon)$-coresets of subtrees rooted at nodes $B_{i-2}$ and $B'_{i-2}$.
We let $O_{i-1}=B_{i-2} \cup B'_{i-2}$. 
Let us fix arbitrary regions $R_{i-1}\in \Lambda_{B_{i-1}}$ and $R_{i}\in \Lambda_{B_i}$ such that $R_{i-1}\cap R_i \neq \emptyset$. 

Recall that we take a sample set of size $r_{i-1}=\min\big( |O_{i-1} \cap R_{i-1}|, \max\big(s_{\mathcal{CC}},O(d_{VC}\epsilon_{i-1}^{-2}\log(n)\cdot \log(\frac{d_{VC}\log(n)}{\epsilon_{i-1}\delta}))\big)\big)$  
points uniformly at random from $O_{i-1} \cap R_{i-1}$, assign a weight of 
$(|O_{i-1} \cap R_{i-1}|)/r_{i-1}$ to every sampled point, and we add the weighted sampled points to $B_{i-1}$. 
Here, $s_{\mathcal{CC}}$ is from Algorithm {\sc Unified}. 
This essentially means, $B_{i-1}\cap R_{i-1}$ is an $\epsilon_{i-1}$-sample for $O_{i-1} \cap R_{i-1}$.  
Since we treat a weighted point $p$ having weight $w_p$ as $w_p$ points at coordinates of $p$, 
we have $|B_{i-1} \cap R_{i-1}| = |O_{i-1} \cap R_{i-1}|$. 
Thus, 
$$
\big| | (O_{i-1} \cap R_{i-1}) \cap R_i|- | (B_{i-1} \cap R_{i-1}) \cap R_i| \big| 
\le \epsilon_{i-1} \cdot |O_{i-1} \cap R_{i-1}| \enspace . 
$$

Observe that  for regions $R_i$ and $R_{i-1}$, 
using Lemma \ref{lem:mix:vc:dim}, $(P,R_i \cap R_{i-1})$ is a range space of dimension $O(2d_{VC}\log(2d_{VC}))$. 
So, we can write 
$$
\big| | O_{i-1} \cap (R_{i-1} \cap R_i)|- | B_{i-1} \cap (R_{i-1} \cap R_i)| \big| 
\le \epsilon_{i-1} \cdot |O_{i-1} \cap R_{i-1}| \enspace . 
$$

Now let us expand $O_{i-1} \cap (R_{i-1} \cap R_i)$ where we use $O_{i-1}=B_{i-2} \cup B'_{i-2}$. 
Since $B_{i-2} $ and $ B'_{i-2}$ are disjoint, we then have 
$(B_{i-2} \cup B'_{i-2}) \cap (R_{i-1} \cap R_i)= (B_{i-2} \cap (R_{i-1} \cap R_i)) \cup (B'_{i-2}  \cap (R_{i-1} \cap R_i))$.

Similarly, let us consider $(k,\epsilon)$-coreset $B_{i-2}$ with its partition $\Lambda_{B_{i-2}}$ which 
is a $(k,\epsilon)$-coreset of its children, i.e., buckets $B_{i-3}$ and $B'_{i-3}$. 
Again, let $O_{i-2}=B_{i-3} \cup B'_{i-3}$. 
Let us fix an arbitrary region $R_{i-2}\in \Lambda_{B_{i-2}}$ such that $R_{i-2}\cap R_{i-1}\cap R_i \neq \emptyset$. 
Again, we take a sample set of size $r_{i-2}=\min\big( |O_{i-2} \cap R_{i-2}|, \max\big(s_{\mathcal{CC}},O(d_{VC}\epsilon_{i-2}^{-2}\log(n)\cdot \log(\frac{d_{VC}\log(n)}{\epsilon_{i-2}\delta}))\big)\big)$  
points uniformly at random from $O_{i-2} \cap R_{i-2}$, assign a weight of 
$(|O_{i-2} \cap R_{i-2}|)/r_{i-2}$ to every sampled point and we add the weighted sampled points to $B_{i-2}$. 
Since $B_{i-2}\cap R_{i-2}$ is an $\epsilon_{i-2}$-sample for $O_{i-2} \cap R_{i-2}$ and $|B_{i-2} \cap R_{i-2}| = |O_{i-2} \cap R_{i-2}|$, 
we then have  
\[
  \begin{split}
     \big| | (O_{i-2} \cap R_{i-2}) \cap (R_{i-1}\cap R_i)| - | (B_{i-2} \cap R_{i-2}) \cap (R_{i-2}\cap R_i)| \big| 
      \le \epsilon_{i-2}\cdot (|O_{i-2} \cap R_{i-2}|) \enspace . 
  \end{split}
\]

Observe that  for regions $R_i$,  $R_{i-1}$ and $R_{i-2}$, 
using Lemma \ref{lem:mix:vc:dim}, $(P,R_i \cap R_{i-1} \cap R_{i-2})$ is a range space of dimension $O(3d_{VC}\log(3d_{VC}))$. 
So, we can write 
\[
  \begin{split}
     \big| | O_{i-2} \cap (R_{i-2} \cap R_{i-1}\cap R_i)| - |B_{i-2} \cap(R_{i-2} \cap R_{i-2}\cap R_i)| \big| 
      \le \epsilon_{i-2}\cdot (|O_{i-2} \cap R_{i-2}|) \enspace . 
  \end{split}
\]

We do the same for $B'_{i-2}$. We define $O'_{i-2}$ for $B'_{i-2}$ similar to $O_{i-2}$. Using triangle inequality we have 
\[
  \begin{split}
     &\big| \sum_{R_{i-2}\in \Lambda_{B_{i-2}}} | O_{i-2} \cap (R_{i-2} \cap R_{i-1}\cap R_i)| + \sum_{R'_{i-2}\in \Lambda_{B'_{i-2}}} | O'_{i-2} \cap (R'_{i-2} \cap R_{i-1}\cap R_i)|  
     - | B_{i-1} \cap (R_{i-1} \cap R_i)| \big|\\
%     & \le \big| \sum_{R_{i-2}\in \Lambda_{B_{i-2}}} | O_{i-2} \cap (R_{i-2} \cap R_{i-1}\cap R_i)| - \sum_{R_{i-2}\in \Lambda_{B_{i-2}}} |B_{i-2} \cap(R_{i-2} \cap R_{i-2}\cap R_i)|\big| \\
%     & + \big| \sum_{R'_{i-2}\in \Lambda_{B'_{i-2}}} | O'_{i-2} \cap (R'_{i-2} \cap R_{i-1}\cap R_i)| - \sum_{R'_{i-2}\in \Lambda_{B'_{i-2}}} |B'_{i-2} \cap(R'_{i-2} \cap R_{i-2}\cap R_i)|\big| \\
%     &+ \big| \sum_{R_{i-2}\in \Lambda_{B_{i-2}}} |B_{i-2} \cap(R_{i-2} \cap R_{i-2}\cap R_i)|+  \sum_{R'_{i-2}\in \Lambda_{B'_{i-2}}} |B'_{i-2} \cap(R'_{i-2} \cap R_{i-2}\cap R_i)| - | B_{i-1} \cap (R_{i-1} \cap R_i)| \big|\\
      &\le \epsilon_{i-2}\cdot \big(\sum_{R_{i-2}\in \Lambda_{B_{i-2}}}  (|O_{i-2} \cap R_{i-2}|) +\sum_{R'_{i-2}\in \Lambda_{B'_{i-2}}}  (|O'_{i-2} \cap R'_{i-2}|) \big) + \epsilon_{i-1} \cdot |O_{i-1} \cap R_{i-1}|  \enspace .
   \end{split}
\]

Now we recurse from level $i-2$ down to level $2$ in which we have $(k,\epsilon)$-coreset $B_{2}$ with partition $\Lambda_{B_{2}}$ 
which is a $(k,\epsilon)$-coreset of its children, i.e., buckets $B_{1}$ and $B'_{1}$. 
Again, let $O_{2}=B_{1} \cup B'_{1}$. 
Let us fix an arbitrary region $R_{2}\in \Lambda_{B_{2}}$ such that $R_{2}\cap \cdots \cap R_i \neq \emptyset$. 
Once again, we take a sample set of size $r_{2}=\min\big( |O_{2} \cap R_{2}|, \max\big(s_{\mathcal{CC}},O(d_{VC}\epsilon_{2}^{-2}\log(n)\cdot \log(\frac{d_{VC}\log(n)}{\epsilon_{2}\delta}))\big)\big)$ 
points uniformly at random from $O_{2} \cap R_{2}$, assign a weight of 
$\frac{|O_{2} \cap R_{2}|}{r_2}$ to every sampled point, and we add the weighted sampled points to $B_{2}$. 
Since $B_{2}\cap R_{2}$ is an $\epsilon_{2}$-sample for $O_{2} \cap R_{2}$ and $|B_{2} \cap R_{2}| = |O_{2} \cap R_{2}|$, we then have 
\[
  \begin{split}
     \big| | (O_{2} \cap R_{2}) \cap (R_3\cap \cdots \cap R_i)| - | (B_{2} \cap R_{2}) \cap (R_{3}\cap \cdots \cap R_i)| \big| 
      \le \epsilon_{2}\cdot (|O_{2} \cap R_{2}|) \enspace . 
  \end{split}
\]

Observe that for regions $R_i, R_{i-1}, \cdots R_{2}$ using 
Lemma \ref{lem:mix:vc:dim}, $(P,R_i \cap \cdots \cap R_{2})$ is a range space of dimension $O((i-1)d_{VC}\log((i-1)d_{VC}))$. 
So, we can write 
\[
  \begin{split}
     \big| | O_{2} \cap (R_{2} \cap \cdots \cap R_i)| - |B_{2} \cap(R_{2}\cap \cdots  \cap R_i)| \big| 
      \le \epsilon_{2}\cdot (|O_{2} \cap R_{2}|) \enspace . 
  \end{split}
\]

By repeated applications of the triangle inequality for levels $i-2$ down to $2$ we obtain 
\[
  \begin{split}
     &\big| \sum_{\text{node } x_2 \text{ in level } 2 } | O_{x_2} \cap (R_{i-1}\cap R_i)| - | B_{i-1} \cap (R_{i-1} \cap R_i)| \big|\\
     &=\big| \sum_{ x_2 \text{ in level } 2 } \sum_{R^{x_2}_{2}\in \Lambda_{x_2}} \sum_{R^{x_3}_{3}\in \Lambda_{x_3}} \cdots \sum_{R^{x_{i-2}}_{i-2}\in \Lambda_{x_{i-2}}}| O_{x_2} \cap (R^{x_2}_{2}\cap R^{x_3}_{3} \cap \cdots \cap R^{x_{i-2}}_{i-2})\cap R_{i-1}\cap R_i| - | B_{i-1} \cap (R_{i-1} \cap R_i)| \big|\\
      &\le \sum_{\text{level } j=2}^{i-2} \sum_{\text{node } x_j \text{ in level } j } \epsilon_{j}\cdot \big(\sum_{R\in \Lambda_{x_j}}  (|O_{x_j} \cap R|)  \big)  
      + \epsilon_{i-1} \cdot |O_{i-1} \cap R_{i-1}|  \enspace ,
   \end{split}
\]
where $x_2$ is a child of node $x_3$, $x_3$ is a child of node $x_4$, and so on, and $O_{x_j}$ is the point set at node $x_j$. 
Observe that in level $1$ (i.e., leaf level) we do not merge buckets and merging buckets starts at level $2$, because of that 
the index of the first sum start with $j=2$. We take sums $\sum_{R_{i-1}\in \Lambda_{B_{i-1}}}$ and $\sum_{R'_{i-1}\in \Lambda_{B'_{i-1}}}$
to conclude 
\[
   \begin{split}
     \big| |P_i \cap R_i| - |B_i \cap R_i| \big|
      \le \sum_{\text{level } j=2}^{i-1} \sum_{\text{node } x_j \text{ in level } j } \epsilon_{j} \big(\sum_{R\in \Lambda_{x_j}}  (|O_{x_j} \cap R|)  \big)  
%      + \epsilon_{i-1}\big(\sum_{R_{i-1}\in \Lambda_{B_{i-1}}} |O_{i-1} \cap R_{i-1}| + \sum_{R'_{i-1}\in \Lambda_{B'_{i-1}}} |O_{i-1} \cap R'_{i-1}| \big) 
      \enspace .  
    \end{split}
\]

In Algorithm \ref{alg:unify:coreset} we simply replace VC-dimension $d_{VC}$ by $O(d_{VC}\log n)$ for all levels $i\in \log n$. 
\end{proofof}

\end{document}